\documentclass[12pt]{amsart}
\pdfoutput=1
\usepackage{epsf,psfrag}
\usepackage{enumerate}
\usepackage{epsfig}
\usepackage{graphicx}
\usepackage{graphics}
\usepackage{latexsym}
\usepackage{epsfig}
\usepackage{amsfonts}
\usepackage{amsthm}
\usepackage{amsmath}
\usepackage{amssymb}
\usepackage[all]{xy}
\usepackage{amscd}
\usepackage{mathrsfs}
\usepackage{amsopn}
\usepackage{amscd}
\usepackage{mathrsfs} % calligraphic capitals
\usepackage{amsopn} % define new math operators
\usepackage[colorlinks=true]{hyperref}

\newtheorem{theorem}{Theorem}
\newtheorem{prop}[theorem]{Proposition}

\newtheorem{lemma}[theorem]{Lemma}

\newtheorem{cor}[theorem]{Corollary}

  % espace de Hilbert
\newcommand{\R}{{\mathbb R}}
\newcommand{\Z}{{\mathbb Z}}

\newcommand{\N}{{\mathbb N}}
\newcommand{\T}{{\mathbb T}}
\newcommand{\norm}[1]{\| #1\|}
\newcommand{\trace}{\mathop{\textup{trace}}}
\newcommand{\js}{\op{JointSpec}}

\newcommand{\phy}{\varphi}

\newcommand{\op}[1]{\!\!\mathop{\rm ~#1}\nolimits}

\newcommand{\om}{\omega}

\newcommand{\De}{\Delta}

 % alg{\`e}bre de Lie
 % alg{\`e}bre de Lie de $H$

\renewcommand{\geq}{\geqslant}
\renewcommand{\leq}{\leqslant}
\newcommand{\pscal}[2]{\langle #1,#2\rangle}
\newcommand{\cv}[1]{{\rm Convex\,\, Hull}\,\left(#1\right)}
\newcommand{\abs}[1]{\left|#1\right|}
\newcommand{\OP}{{\rm Op}_\hbar}

%%%%%%%%%%%%%%%%%%%%
\newcommand{\RM}{\mathbb{R}}
\newcommand{\h}{\hbar}

%%%added July 17,2012
%%%%

%%%
%%LEONID'S NEWCOMMANDS
\newcommand{\Id}{{{\rm Id}}}
\newcommand{\f}{{\vec{f}}}
\newcommand{\cA}{{\mathcal{A}}}

\newcommand{\cH}{{\mathcal{H}}}

\newcommand{\cL}{{\mathcal{L}}}

\newcommand{\cO}{{\mathcal{O}}}

\newcommand{\cQ}{{\mathcal{Q}}}

%%%%%%%%%%%%%%%%%%%%%%%%%%%%%%%%%%%555

%%%%%%%%
%%%%%%%%More from another paper by Laurent

%%%%%%%%
%%%%%%%%
\newenvironment{remark}{\refstepcounter{theorem}\par\medskip\noindent{\bf
Remark~\thetheorem.}}{\unskip\nobreak\hfill\hbox{ $\oslash$}\par\bigskip}

%\newcounter{example}[section]
%\renewcommand{\theexample}{\thesection.\arabic{example}}

%\newcounter{definition}[section]
%\renewcommand{\thedefinition}{\thesection.\arabic{definition}}
%\newenvironment{definition}{\refstepcounter{theorem}\par\medskip\noindent{\bf
%Definition~\thetheorem.}\bigskip}
%{\unskip\nobreak\hfill\hbox{ $\oslash$}\par\bigskip}

\newenvironment{definition}{\refstepcounter{theorem}\par\medskip\noindent{\bf
Definition~\thetheorem.}}{\unskip\nobreak\hfill\hbox{}\par\medskip}
%\newcounter{condition}[section]
%\renewcommand{\thecondition}{\thesection.\arabic{condition}}

%\newfont{\gothic}{eufm10 scaled\magstep0}
%\newcommand{\got}[1]{\mbox{\gothic #1}}

\begin{document}

\title[Semiclassical quantization and spectral limits]{{\bf Semiclassical
quantization and spectral limits of $\hbar$\--pseudodifferential and
Berezin\--Toeplitz operators}}
% \\ for convex domains}}

\author{{\'A}lvaro Pelayo\,\,\,\,\,\,\, Leonid
  Polterovich\,\,\,\,\,\,\, San V\~{u} Ng\d{o}c}

\date{}

\begin{abstract}
  We introduce a minimalistic notion of semiclassical quantization and
  use it to prove that the convex hull of the semiclassical spectrum of a   quantum
  system given by a collection of  commuting operators converges to the convex hull of 
  the spectrum of the associated classical system. This 
  gives a quick alternative solution to the isospectrality problem for quantum toric 
  systems.  If the operators are uniformly
  bounded, the convergence is uniform. Analogous results hold for non\--commuting operators.
\end{abstract}

\maketitle

\section{Introduction} \label{sec:intro}

In the past ten years there has been a flurry of activity concerning
the interaction between symplectic geometry and spectral theory. The
potential of this interaction was already pointed out by Colin de
Verdi{\`e}re \cite{CdV2, CdV} in a quite general setting: pseudodifferetial operators
on cotangent bundles.  This paper
deals with the question: can one recover a classical system (described
by symplectic geometry) from the spectrum of its quantization
(described by spectral theory)? This is a classical question in inverse spectral
theory going back to pioneer works of Colin de Verdi{\`e}re and 
Guillemin\--Sternberg, in the 1970s and 1980s.

Many contributions followed their works, eg
Iantchenko--Sj{\"o}strand--Zworski~\cite{IaSjZw2002}.  
A few global
spectral results have also been obtained recently, for instance by
Hezari and Zelditch for symmetric domains in
$\R^n$~\cite{hezari-zelditch} or by V\~{u} Ng\d{o}c~\cite{san-inverse}
for one degree of freedom pseudodifferential operators. Recently, the
first and third author settled, jointly with L. Charles, this problem
for toric integrable systems \cite{ChPeVN2011}. The paper
\cite{ChPeVN2011} gives a full description of the semiclassical
spectral theory of toric integrable systems. As a corollary of this theory, that
paper solves the isospectral theorem for toric systems. We refer to
Section \ref{sec:toric} for more details, and to \cite{ChPeVN2011} for
a more complete list of references on inverse spectral results in
symplectic geometry. For an interesting
semiclassical recent work see the article of Guillemin, Paul, and 
Uribe \cite{GuPaUr2007} on trace invariants.

The goal of the present paper is to introduce a 
minimalistic notion of semiclassical quantization
and use it to provide a simple constructive
argument proving that from the joint spectrum of a quantum system one
can recover the convex hull of the classical spectrum. This applies
not only to toric integrable systems, but much more generally.
Moreover, our results are general enough to apply to
pseudodifferential and Berezin-Toeplitz quantization.

\vspace{1mm}
\paragraph{\emph{Structure of the paper}.} In section \ref{results}
we quickly state our main results in the most important particular
cases: pseudodifferential and Berezin-Toeplitz quantization: Theorem~\ref{theo:toeplitz} 
and Theorem~\ref{theo:pseudodifferential}.
In Section \ref{sec-prel-selfad} we recall some preliminaries on
self-adjoint operators.  In Section \ref{sec:quantizable} we introduce
a general quantization setting and state results that encompass
Theorem~\ref{theo:toeplitz} and Theorem~\ref{theo:pseudodifferential}.  In
Section \ref{sec:limits} we prove a key lemma concerning spectral
limits of semiclassical operators.  Section \ref{sec:convexity}
reviews properties of support functions and convex sets which are
needed for the proofs.  In Section \ref{sec:proof} we combine the
previous sections to complete the proofs. In Sections
\ref{sec:toeplitz} and \ref{sec:pseudodifferential} we briefly review
the Berezin\--Toeplitz quantization and $\hbar$-pseudo-differential
calculus and show that they are covered by the quantization procedure
of Section \ref{sec:quantizable}.  

We present a number of examples
illustrating our results: Section \ref{sec:toric} explains how to use
Theorem \ref{theo:toeplitz} to conclude, quickly, the isospectrality
theorem for toric systems which appeared in a previous work
\cite{ChPeVN2011} of the first and third author and L.~Charles. In
Section \ref{sec-examples-1} we apply our results to a physically
interesting system, coupled angular momenta on $S^2 \times S^2$
described by D.A. Sadovski{\'\i} and B.I. Zhilinski{\'\i} in
\cite{SaZh1999}.  Section \ref{subsec-rot} deals with
$\hbar$-pseudodifferential quantization of a particle on the plane in
a rotationally symmetric potential.

In the last section we extend our results to nonnecessarily
commuting operators.

\section{Main results} \label{results}

For simplicity,
in this section we will only state our results in the two (otherwise
rather general) cases of pseudodifferential and Berezin-Toeplitz quantization.
In the statements below, we fix a quantization scheme for a symplectic
manifold $M$ and focus on a collection $\mathcal{F} = (T_1,\dots,T_d)$
of mutually commuting self-adjoint semi-classical operators. These
operators depend on the Planck constant $\hbar \in I$, where $I$ is a
subset of $(0,1]$ that accumulates at $0$, and act on a Hilbert space
$\mathcal{H}_{\hbar},\; \hbar\in I$.

Let $T_1,\ldots,T_d$ are pair-wise commuting selfadjoint (not
necessarily bounded) operators on a Hilbert space $\mathcal{H}$ with a
common dense domain $\mathcal{D} \subset \mathcal{H}$ such that
$T_j(\mathcal{D}) \subset \mathcal{D}$ for all $j$.  
The \emph{joint spectrum} of $(T_1,\ldots,T_d)$ is by definition
the support of the joint spectral measure. It is denoted by
$\op{JointSpec}(T_1,\, \ldots, \, T_d)$.

For instance, if $T_j$'s are endomorphisms of a finite dimensional
vector space, then the joint spectrum of $T_1,\ldots,T_d$ is the set
of $(\lambda_1,\dots,\lambda_d)\in \mathbb{C}^d$ such that there
exists a non-zero vector $v$ for which
\[
T_j v = \lambda_j v, \quad \forall j=1,\dots,n.
\]
If $T_1,\, \ldots,\, T_d$ are pairwise commuting semiclassical
operators, then of course the joint spectrum of $T_1,\ldots,T_d$
depends on $\hbar$.

\begin{figure}[h]
  % \centering
  \includegraphics[width=0.8\textwidth]{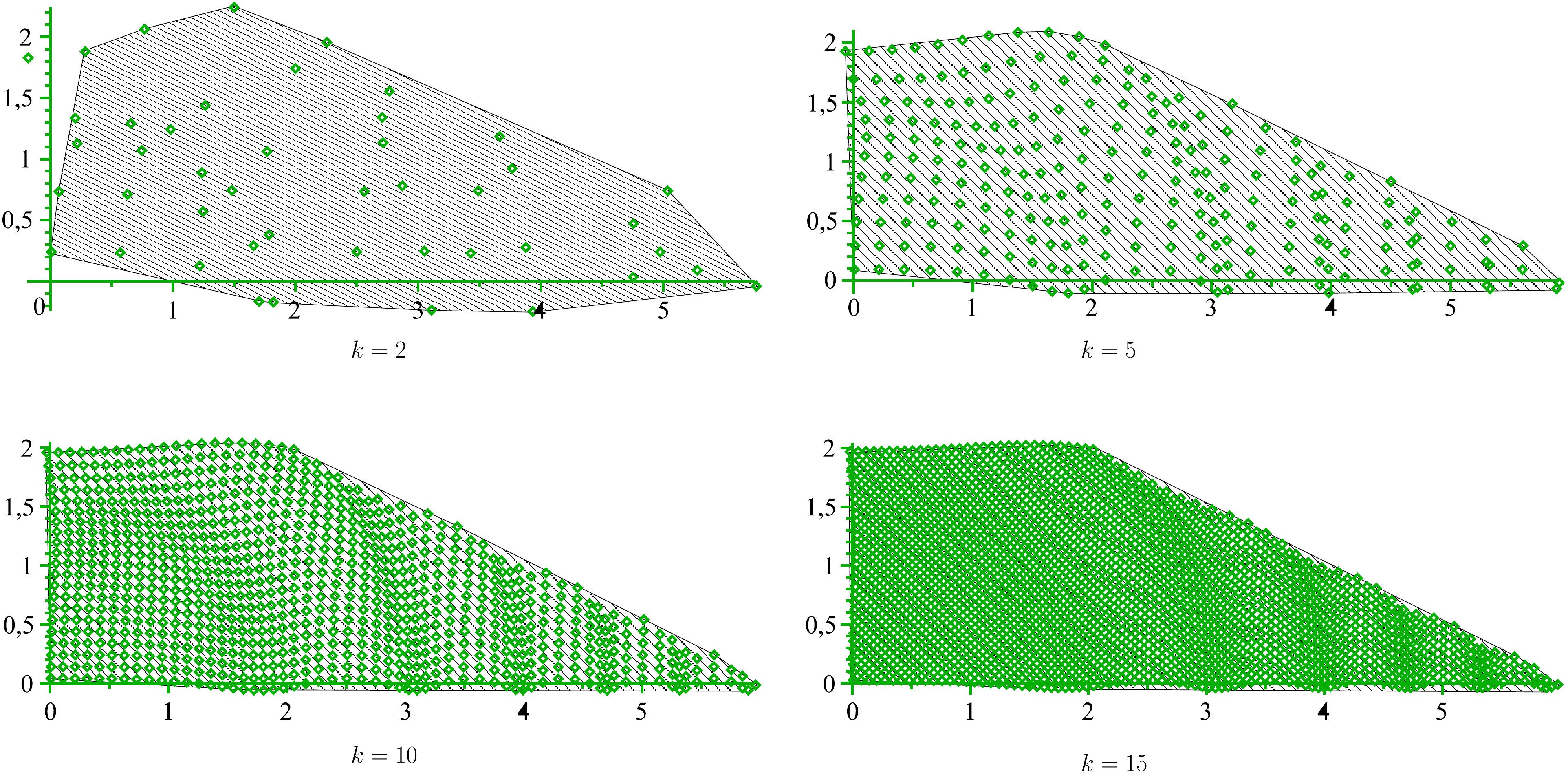}
  \caption{Convergence of convex hulls of semiclassical 
  spectra ($k=1/\hbar$) of a quantum  toric system \cite{ChPeVN2011}.}
  \label{fig:convex_hull}
\end{figure}

Following the physicists, we shall call \emph{classical spectrum} of
$(T_1,\dots, T_d)$ the image $F(M)\subset \R^d$, where
$F=(f^{(1)},\dots,f^{(d)})$ is the map of principal symbols of
$T_1,\dots,T_d$.

\begin{theorem} \label{theo:toeplitz} Let $M$ be a prequantizable
  closed symplectic manifold.  Let $d\geq 1$ and let
  $\mathcal{F}:=(T_1,\dots, T_d)$ be a family of pairwise commuting
  selfadjoint Berezin-Toeplitz operators on $M$.  Let
  $\mathcal{S}\subset\R^d$ be the classical spectrum of $\mathcal{F}$.
  Then
  \begin{equation}\label{eq-Hausdorff-spectrum}
    \lim_{\hbar \rightarrow 0} {\rm Convex\,\, Hull}\,
    \Big(\textup{JointSpec}(\mathcal{F}) \Big) \,=\, \cv{\mathcal{S}}
  \end{equation}
  where the limit convergence is in the Hausdorff metric.
\end{theorem}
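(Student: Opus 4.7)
The plan is to reduce the Hausdorff convergence of convex hulls to the uniform convergence, on the unit sphere $S^{d-1}\subset\R^d$, of the associated support functions. Recall that for a compact set $K\subset\R^d$ the support function is $h_K(\xi):=\sup_{x\in K}\pscal{\xi}{x}$, and two convex bodies are close in Hausdorff distance iff their support functions are uniformly close on $S^{d-1}$; this is precisely the dictionary that Section~\ref{sec:convexity} is intended to provide. Taking this for granted, the problem turns into showing that $h_{\cv{\js(\cF)}}(\xi)\to h_{\cv{\cS}}(\xi)$ uniformly in $\xi\in S^{d-1}$ as $\hbar\to 0$.

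First I would compute each side separately. On the quantum side, since $T_1,\dots,T_d$ pairwise commute and are self-adjoint, the joint spectrum is the support of the joint spectral measure, and by the functional calculus
\[
h_{\cv{\js(\cF)}}(\xi)\;=\;\sup_{\la\in\js(\cF)}\pscal{\xi}{\la}\;=\;\sup\op{Spec}(T_\xi),\qquad T_\xi:=\sum_{j=1}^d\xi_j T_j.
\]
Because Berezin\--Toeplitz quantization is $\R$\--linear in the symbol, $T_\xi$ is itself a selfadjoint Berezin\--Toeplitz operator on $M$, with principal symbol $f_\xi:=\sum_j\xi_j f^{(j)}$. On the classical side,
\[
h_{\cv{\cS}}(\xi)\;=\;\max_{m\in M}\pscal{\xi}{F(m)}\;=\;\max_{m\in M}f_\xi(m).
\]
The theorem is thus equivalent to showing that $\sup\op{Spec}(T_\xi)\to\max_M f_\xi$ uniformly in $\xi\in S^{d-1}$.

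The heart of the proof is then a direct application of the key ``top of spectrum tends to maximum of principal symbol'' lemma proved in Section~\ref{sec:limits}, to the one\--parameter family of Berezin\--Toeplitz operators $\{T_\xi\}_{\xi\in S^{d-1}}$. For each fixed $\xi$ this yields the required pointwise convergence of support functions.

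To upgrade pointwise to uniform convergence on $S^{d-1}$, I would use two soft ingredients: (i) $M$ is closed, so the Berezin\--Toeplitz operators $T_j$ are uniformly bounded in $\hbar$ (with operator norm controlled by $\|f^{(j)}\|_\infty$ up to $O(\hbar)$), hence $\xi\mapsto\sup\op{Spec}(T_\xi)$ is Lipschitz on $S^{d-1}$ with an $\hbar$\--uniform Lipschitz constant; and (ii) $\xi\mapsto\max_M f_\xi$ is trivially Lipschitz on $S^{d-1}$. Both support functions are therefore equicontinuous in $\xi$, and by compactness of $S^{d-1}$ pointwise convergence implies uniform convergence. The main obstacle in the whole argument is not this equicontinuity bookkeeping but the spectral limit lemma of Section~\ref{sec:limits}; once that is in hand, the rest is purely soft convex\--geometric packaging and gives the isospectrality corollary for toric systems in Section~\ref{sec:toric} as a cheap by\--product.
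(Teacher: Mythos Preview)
Your proposal is correct and follows essentially the same route as the paper: the paper reduces to the general Theorem~\ref{theo:bounded}, whose proof computes the support function of the joint spectrum via Lemma~\ref{lem:btcommuting} as $\sup\sigma(T^{(\alpha)})$, applies the spectral limit Lemma~\ref{lem:max} to get pointwise convergence $\Phi_\Sigma(\alpha)\to\Phi_{F(M)}(\alpha)$, and then uses the Lipschitz bound of Proposition~\ref{prop:phi}(iii) together with compactness of $S^{d-1}$ to upgrade to uniform convergence, exactly as you outline. The only cosmetic difference is that the paper packages the equicontinuity by bounding the joint spectrum (hence applying Proposition~\ref{prop:phi}(iii) directly to $\Phi_\Sigma$), whereas you phrase it as an $\hbar$\--uniform Lipschitz bound on $\xi\mapsto\sup\op{Spec}(T_\xi)$ coming from $\|T_j\|\leq\|f^{(j)}\|_\infty+O(\hbar)$; these are the same thing.
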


In the $\hbar$-pseudodifferential setting (see for
instance~\cite{dimassi-sjostrand}), the analysis is more complicated
due to the possible unboundedness of the operators. Here we work with
the standard H\"{o}rmander class of symbols $a(x,\xi,\hbar)$ on
$\RM^{2n}$ or ${\rm T}^*X$ with closed $X$ imposing the following
restriction on the dependence of $a$ on $\hbar$:
$$a(x,\xi,\hbar)=a_0(x,\xi) + \hbar a_{1,\hbar}(x,\xi)\;,$$
where all $a_{1,\hbar}(x,\xi)$ are uniformly (in $\hbar$) bounded and
supported in the same compact set.  The principal symbol $a_0$ can be
unbounded. We shall say that $a$ {\it mildly depends on $\hbar$}.
Fortunately, in a number of meaningful examples one deals with
$\hbar$-independent symbols, see e.g. Section \ref{subsec-rot} below.

\begin{theorem} \label{theo:pseudodifferential} Let $X$ be either
  $\mathbb{R}^n$, $n \geq 1$, or a closed manifold. Let $d\geq 1$ and
  let $\mathcal{F}:=(T_1,\dots T_d)$ be a family of pairwise commuting
  selfadjoint $\hbar$-pseudodifferential operators on $X$ whose
  symbols mildly depend on $\hbar$. Then the following hold.
  \begin{itemize}
  \item[{(i)}] From the family
    \[
    \Big\{ {\rm Convex\,\, Hull}\,\Big(\textup{JointSpec}
    (\mathcal{F})\Big) \Big\}_{\hbar \in J}
    \]
    one can recover the convex hull of the classical spectrum of
    $\mathcal{F}$.
  \item[{(ii)}] If, in addition, the principal symbols of $T_j$ are
    bounded for every $1 \leq j \leq d$, the joint and the classical
    spectra are related by equation \eqref{eq-Hausdorff-spectrum}.
  \end{itemize}
\end{theorem}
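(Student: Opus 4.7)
The plan is to reduce Theorem~\ref{theo:pseudodifferential} to the abstract quantization framework of Section~\ref{sec:quantizable} and to the spectral limit lemma of Section~\ref{sec:limits}, by verifying that $\hbar$-pseudodifferential operators with mildly $\hbar$-dependent symbols satisfy the required axioms. This parallels the Berezin--Toeplitz argument for Theorem~\ref{theo:toeplitz}, but the possible unboundedness of the principal symbols (hence of the operators) forces us to work with support functions valued in $\mathbb{R}\cup\{+\infty\}$ rather than directly with the Hausdorff metric.

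For the verification step, fix $u\in\mathbb{R}^d$ and form $A_u(\hbar) := \sum_j u_j T_j(\hbar)$. By linearity of $\hbar$-pseudodifferential calculus, $A_u(\hbar)$ is itself an $\hbar$-pseudodifferential operator on $X$ whose symbol has principal part $a_u := \sum_j u_j f^{(j)}$ and a subprincipal part that is uniformly bounded and uniformly compactly supported in $\hbar$, i.e.\ again mildly $\hbar$-dependent. Standard results (see for instance~\cite{dimassi-sjostrand}) yield essential self-adjointness of $A_u(\hbar)$ on a common core (the Schwartz space on $\mathbb{R}^n$, or $C^\infty(X)$ on a closed manifold), and the joint commutativity $[T_i,T_j]=0$ gives rise to a joint bounded Borel functional calculus for $\mathcal{F}$. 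These facts are precisely the hypotheses needed to place $\mathcal{F}$ in the abstract setting of Section~\ref{sec:quantizable}.

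For part~(ii), boundedness of each $f^{(j)}$ together with the mild $\hbar$-dependence of the subprincipal terms yields a uniform (in $\hbar$) bound on the operator norm of $A_u(\hbar)$ for every $u$, so the convex hulls of the joint spectra remain in a fixed compact subset of $\mathbb{R}^d$. The support function of $\textup{Convex Hull}(\textup{JointSpec}(\mathcal{F}))$ in direction $u$ equals $h_\hbar(u) = \sup \textup{spec}(A_u(\hbar))$, while the support function of $\textup{Convex Hull}(\mathcal{S})$ equals $h(u) = \sup_{m\in M} a_u(m)$. The key lemma of Section~\ref{sec:limits} yields
\[
\lim_{\hbar\to 0} h_\hbar(u) = h(u) \qquad \text{for every } u\in\mathbb{R}^d.
\]
Section~\ref{sec:convexity} then converts this pointwise convergence of support functions of bounded convex sets into uniform convergence on the unit sphere, which is equivalent to Hausdorff convergence of the sets themselves, giving~\eqref{eq-Hausdorff-spectrum}.

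For part~(i), the same identification of support functions applies, but now $h(u)$ and $h_\hbar(u)$ may equal $+\infty$. Since a closed convex set in $\mathbb{R}^d$ is determined by its support function $\mathbb{R}^d\to\mathbb{R}\cup\{+\infty\}$, pointwise convergence of $h_\hbar$ to $h$ suffices to reconstruct $\textup{Convex Hull}(\mathcal{S})$ from the family $\{\textup{Convex Hull}(\textup{JointSpec}(\mathcal{F}))\}_{\hbar\in J}$. The main obstacle is thus to extend the spectral limit lemma of Section~\ref{sec:limits} to possibly unbounded self-adjoint operators; when this is not built in directly, a natural workaround is to apply the bounded case to $\chi_R(A_u(\hbar))$ for a cutoff $\chi_R\in C_c^\infty(\mathbb{R})$ equal to the identity on $[-R,R]$ (a bounded $\hbar$-pseudodifferential operator by semiclassical functional calculus, with principal symbol $\chi_R\circ a_u$), and then to pass to the limit $R\to\infty$ using that both $\sup\textup{spec}(A_u(\hbar))$ and $\sup a_u$ are monotone limits of their truncations.
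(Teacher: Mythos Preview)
Your reduction is exactly the paper's approach: verify that $\hbar$-pseudodifferential calculus satisfies the axioms (Q1)--(Q4) of Definition~\ref{BT} (this is the content of Section~\ref{sec:pseudodifferential}), and then invoke Theorems~\ref{theo:general} and~\ref{theo:bounded}, whose proofs proceed precisely via the support-function identity $\Phi_{\Sigma}(\alpha)=\sup\sigma(\sum_j\alpha_jT_j)$ combined with Lemma~\ref{lem:max} and the convexity toolkit of Section~\ref{sec:convexity}.

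The one point worth correcting is your treatment of part~(i). You present the unbounded case as an ``obstacle'' requiring a functional-calculus cutoff $\chi_R(A_u(\hbar))$, but this detour is unnecessary: Lemma~\ref{lem:max} is already stated and proved for arbitrary $\vec f\in\mathcal{A}_I$, with no boundedness assumption on the principal part~$f_0$. Step~2 of its proof explicitly handles the case $\sup_M f_0=+\infty$ by setting $F_\epsilon=1/\epsilon$ and producing a quasimode localized where $f_0\geq 1/\epsilon$, which forces $\lambda_{\sup}(\hbar)\to+\infty$. So the pointwise limit $h_\hbar(u)\to h(u)$ in $\mathbb{R}\cup\{+\infty\}$ comes straight from Lemma~\ref{lem:max}, and Lemma~\ref{lem:duality}(i) then recovers $\cv{\mathcal{S}}$ from $\Phi_{F(M)}$. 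Your cutoff argument can be made to work (with some care about the shape of $\chi_R$ so that $\sup\sigma(\chi_R(A_u))$ controls $\sup\sigma(A_u)$ from below), but it is strictly more work than what the paper does.
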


\medskip
\noindent
The assumption on mild dependence of the symbol on $\hbar$ cannot be
completely dropped, see Remark \ref{remark-hbar-dependence} below.  An
immediate consequence of the theorems is the following simple
statement.

\begin{cor} \label{cor:im0} Let $T_1,\dots,T_d$ be either:
  \begin{itemize}
  \item[{(i)}] commuting self\--adjoint Berezin\--Toeplitz operators on a
    prequantizable closed symplectic manifold, or
  \item[{(ii)}] $\hbar$\--pseudodifferential self\--adjoint operators on
    $\mathbb{R}^n$ or a closed manifold whose symbols mildly depend on
    $\hbar$.
  \end{itemize}
  If the classical spectrum of $(T_1,\dots, T_d)$ is convex, then the
  joint spectrum recovers it.
\end{cor}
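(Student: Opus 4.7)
The plan is to reduce the corollary directly to Theorems~\ref{theo:toeplitz} and~\ref{theo:pseudodifferential}, exploiting only the elementary observation that the convex hull operation acts as the identity on convex sets: if $\mathcal{S}\subset\R^d$ is convex, then $\cv{\mathcal{S}} = \mathcal{S}$.

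For case (i), I would simply apply Theorem~\ref{theo:toeplitz}, which yields the Hausdorff convergence
\[
\lim_{\hbar \rightarrow 0} \cv{\op{JointSpec}(\mathcal{F})} \,=\, \cv{\mathcal{S}}.
\]
Under the convexity hypothesis, the right-hand side is $\mathcal{S}$ itself. Since the Hausdorff limit of a family of (nonempty compact) subsets of $\R^d$ is unique when it exists, this means the classical spectrum $\mathcal{S}$ is entirely determined by the family of joint spectra, and is in fact recovered as an explicit Hausdorff limit.

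For case (ii), I would apply Theorem~\ref{theo:pseudodifferential}(i): the family $\{\cv{\op{JointSpec}(\mathcal{F})}\}_{\hbar\in J}$ already determines $\cv{\mathcal{S}}$, which by the convexity assumption is $\mathcal{S}$ itself. If in addition the principal symbols of the $T_j$ are bounded, Theorem~\ref{theo:pseudodifferential}(ii) gives the stronger conclusion that $\mathcal{S}$ is recovered in Hausdorff distance, exactly as in case (i).

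There is essentially no substantial obstacle here: the corollary is a formal consequence of the two preceding theorems, with the convexity hypothesis merely collapsing the convex hull on the classical side. The one mild subtlety worth noting is that in the $\hbar$\--pseudodifferential setting the principal symbols (and therefore $\mathcal{S}$) may be unbounded, so the statement should be read in the sense of reconstruction of a possibly non-compact convex subset of $\R^d$; this is precisely what case~(i) of Theorem~\ref{theo:pseudodifferential} is designed to handle, and no further argument is needed.
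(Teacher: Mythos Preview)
Your proposal is correct and matches the paper's approach: the paper states the corollary as ``an immediate consequence of the theorems'' without further argument, and your reduction to Theorems~\ref{theo:toeplitz} and~\ref{theo:pseudodifferential} via $\cv{\mathcal{S}}=\mathcal{S}$ is exactly that immediate consequence spelled out.
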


\section{Preliminaries on self-adjoint operators}\label{sec-prel-selfad}
First let us recall some elementary results from operator theory. For
any (not necessarily bounded) selfadjoint operator $A$ on a Hilbert
space with a dense domain and with spectrum $\sigma(A)$, we have
\begin{equation}
  \sup \sigma(A) = \sup_{u\neq 0} \frac{\pscal{Au}{u}}{\pscal{u}{u}}\;.
  \label{equ:rayleigh}
\end{equation}
This follows for instance from~\cite[Proposition
5.12]{hislop-sigal}. We give the proof here for the reader's
convenience.  If $\lambda\in\sigma(A)$, then by the Weyl criterium,
there exists a sequence $(u_n)$ with $\norm{u_n}=1$ such that
\[
\lim_{n\to\infty} \norm{(A-\lambda {\rm Id})u_n}=0.
\]
Therefore, $\lim_{n\to\infty}\pscal{Au_n}{u_n} = \lambda$, which
implies that
\[
\sup_{\norm{u}=1} \pscal{Au}{u} \geq \sup \sigma(A).
\]
Conversely, if $\sigma(A)$ lies in $(-\infty,c]$, then by the Spectral
Theorem $A \leq c \cdot \Id$ which yields
\[
\sup_{\norm{u}=1} \pscal{Au}{u} \leq \sup \sigma(A).
\]
This proves \eqref{equ:rayleigh}.

Let us mention also that it implies
\begin{equation}
  \sup \{|s|: s \in \sigma(A)\} = \sup_{u\neq 0} \frac{|\pscal{Au}{u}|}{\pscal{u}{u}} = \norm{A}\leq \infty ;.
  \label{equ:rayleigh-1}
\end{equation}

\section{Semiclassical quantization} \label{sec:quantizable}

We shall prove Theorem~\ref{theo:toeplitz}
and Theorem~\ref{theo:pseudodifferential} in a more general context of
semiclassical quantization.

Let $M$ be a connected manifold (either closed or open).  Let
$\mathcal{A}_0$ be a subalgebra of $\textup{C}^\infty(M;\R)$
containing the constants and all compactly supported functions. We fix
a subset $I \subset (0,1]$ that accumulates at $0$.  If $\mathcal{H}$
is a complex Hilbert space, we denote by $\mathcal{L}(\mathcal{H})$
the set of linear (possibly unbounded) selfadjoint operators on
$\mathcal{H}$ with a dense domain. By a slight abuse of notation, we
write $\norm{T}$ for the \emph{operator norm} of an operator, and $\norm{f}$ for
the \emph{uniform norm} of a function on $M$.

\begin{definition} \label{BT} A \emph{semiclassical quantization} of
  $(M,\mathcal{A}_0)$ consists of a family of complex Hilbert spaces
  $\mathcal{H}_{\hbar},\; \hbar\in I$, and a family of $\R$\--linear
  maps $\OP \colon \mathcal{A}_0 \to \mathcal{L}(\mathcal{H}_{\hbar})$
  satisfying the following properties, where $f$ and $g$ are in
  $\mathcal{A}_0$:
  \begin{enumerate}[(Q1)]
  \item \label{item:one} $\norm{\OP (1) - {\rm Id}} =
    \mathcal{O}(\hbar)$ {\bf (normalization)};
  \item \label{item:garding} for all $f \geq 0$ there exists a
    constant $C_f$ such that $\OP (f) \geq -C_f \hbar $ {\bf
      (quasi-positivity)};
  \item \label{item:norm} let $f\in\mathcal{A}_0$ such that $f\neq 0$
    and has compact support, then
  $$\liminf_{\hbar\to 0}
  \norm{\OP(f)}>0$$ {\bf (non-degeneracy)};
\item \label{item:symbolic} if $g$ has compact support, then for all
  $f$, $\OP(f)
  \circ \OP(g)$ is bounded, and we have \[ \norm{\OP(f) \circ \OP(g) -
    \OP(fg)} = \mathcal{O}(\hbar),
  \] {\bf (product formula)}.
\end{enumerate}
A \emph{quantizable} manifold is a manifold for which there exists a
semiclassical quantization.
\end{definition}

\medskip
\noindent We shall often use the following consequence of these
axioms: for a bounded function $f$, the operator $\OP(f)$ is
bounded. Indeed, if $c_1 \leq f \leq c_2$ for some $c_1,c_2 \in \RM$,
(Q\ref{item:one}) and (Q\ref{item:garding}) yield
\begin{equation}\label{eq-gard-cor}
  c_1\cdot\Id - \mathcal{O}(\hbar) \leq \OP(f) \leq c_2\cdot \Id +  \mathcal{O}(\hbar).
\end{equation}
Since our operators are selfadjoint, this implies by formula
\eqref{equ:rayleigh-1} above
\begin{equation} \label{eq-norm-bounds-vsp} \norm{\OP(f)} \leq \norm{f} +
  \cO(\hbar)\;.
\end{equation}

Next, consider the algebra $\mathcal{A}_I$ whose elements are
collections $\vec{f} =(f_\hbar)_{\hbar \in I}$, $f_\hbar
\in \mathcal{A}_0$ with the following property: for each $\f$ there
exists $f_0 \in \mathcal{A}_0$ so that
\begin{equation}\label{eq-vector}
  f_\hbar = f_0+ \hbar f_{1,\hbar}\;,
\end{equation}
where the sequence $f_{1,\hbar}$ is uniformly bounded in $\hbar$ and
supported in the same compact set $K=K(\f) \subset M$.  The function
$f_0$ is called the {\it principal part} of $\f$. If $f_0$ is
compactly supported as well, we say that $\f$ is \emph{compactly
  supported}.

\begin{definition}
  We define a map $$\text{Op}: \mathcal{A}_I \to \prod_{\hbar \in I}
  \mathcal{L}(\mathcal{H}_\hbar),\;\; \f=(f_\hbar) \mapsto
  (\OP(f_\hbar))\;.$$ {\it A semiclassical operator} is an element
  in the image of $\text{Op}$. Given $\f \in \cA_I$, the function
  $f_0 \in \cA_0$ defined by \eqref{eq-vector} is called {\it the
    principal symbol} of $\text{Op}(\f)$.
\end{definition}

By \eqref{eq-norm-bounds-vsp}
\begin{equation}\label{eq-cO}
  \OP(f_\hbar) = \OP(f_0)+ \cO(\hbar)\;.
\end{equation}
This together with the product formula (Q\ref{item:symbolic}) readily
yields that for every $\vec{g}$ with compact support and
every $\f$,
\begin{equation}
  \label{eq-symbolic-prime}
  \norm{\OP(f_\hbar) \circ \OP(g_\hbar) -
    \OP(f_\hbar g_\hbar)} = \mathcal{O}(\hbar)\;.
\end{equation}

Now we are ready to show that {\it the principal symbol of a
  semiclassical operator is unique}.  Indeed, if $\text{Op}(\f)=0$,
then for any compactly supported function $\chi$, we get by
\eqref{eq-symbolic-prime}
\[
\OP(f_\hbar\chi) = \OP(f_\hbar)\OP(\chi) + \mathcal{O}(\hbar) =
\mathcal{O}(\hbar),
\]
and then by \eqref{eq-cO}, $\OP(f_0\chi) =
\mathcal{O}(\hbar)$. By~(Q\ref{item:norm}) we conclude that
$f_0\chi=0$. Since $\chi$ is arbitrary, $f_0=0$.

\medskip

Important examples of semiclassical quantization in the sense of
Definition~\ref{BT} are provided by symplectic geometry.  They include
Berezin-Toeplitz quantization on closed prequantizable symplectic manifolds and
(certain versions of) $\hbar$-pseudodifferential calculus on cotangent
bundles. In the latter case the algebra $\cA_0$ is the usual
H\"{o}rmander class of symbols, while its deformation $\cA_I$ is more
special and is defined as above. The details will be explained in
Sections~\ref{sec:toeplitz} and~\ref{sec:pseudodifferential}.

\begin{remark}
  The definition of semiclassical quantization above is
  ``minimalistic": we list only those axioms which will enable us to
  reconstruct the classical spectrum from the quantum one, which is
  the main objective of the paper. In particular, we do not require a
  stronger form of the correspondence principle which links together
  commutators of semiclassical operators with Poisson brackets of
  their symbols, and even do not assume that the manifold $M$ is
  symplectic.
\end{remark}

For the following results recall the notion of quantizable manifold we use 
(Definition \ref{BT}).

\begin{theorem} \label{theo:general} Let $M$ be a quantizable
  manifold.  Let $d\geq 1$ and let $(T_1,\dots T_d)$ be pairwise
  commuting semiclassical operators on $M$.  Let $J$ be a subset of
  $I$ that accumulates at $0$.  Then from the family
  \[
  \Big\{ {\rm Convex\,\, Hull}\,\Big(\textup{JointSpec}
  (T_1,\dots,T_d)\Big) \Big\}_{\hbar \in J}
  \]
  one can recover the convex hull of the classical spectrum of
  $(T_1,\dots,T_d)$.
\end{theorem}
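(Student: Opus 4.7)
The plan is to reduce the statement about convex hulls to a statement about support functions, one linear direction at a time, and then invoke the spectral limit lemma from Section~\ref{sec:limits}. Recall that any closed convex subset $C \subset \R^d$ is uniquely determined by its support function $h_C(\xi) := \sup_{x\in C} \pscal{\xi}{x}$, $\xi \in \R^d$ (with values possibly in $\R\cup\{+\infty\}$), and $h_C = h_{\overline{C}}$. Since $h_{F(M)} = h_{\cv{F(M)}}$, it therefore suffices to produce, for each fixed $\xi\in\R^d$, a constructive procedure reading off $h_{F(M)}(\xi) = \sup_{x\in M} \pscal{\xi}{F(x)}$ from the given family of convex hulls.

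For a fixed $\xi = (\xi_1,\dots,\xi_d)\in\R^d$, consider the linear combination $T_\xi := \xi_1 T_1 + \cdots + \xi_d T_d$. Since $T_1,\dots,T_d$ pairwise commute and are selfadjoint, they admit a joint spectral measure whose support is $\js(T_1,\dots,T_d)$, and $T_\xi$ is selfadjoint with spectrum equal to the pushforward of this support under the linear map $x\mapsto\pscal{\xi}{x}$. Consequently
\[
  \sup\sigma(T_\xi) \,=\, \sup_{x\in\js(\mathcal{F})} \pscal{\xi}{x} \,=\, h_{\cv{\js(\mathcal{F})}}(\xi).
\]
Moreover, $T_\xi = \OP(\xi_1 f^{(1)}_\hbar + \cdots + \xi_d f^{(d)}_\hbar)$ is itself a semiclassical operator by $\R$-linearity of $\OP$, and its principal symbol is precisely $\pscal{\xi}{F} = \sum_j \xi_j f^{(j)}$.

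The key spectral limit lemma of Section~\ref{sec:limits} is then expected to yield, for any semiclassical operator $T = \OP(\vec{f})$ with principal symbol $f_0$, the identity $\lim_{\hbar\to 0}\sup\sigma(T) = \sup_M f_0$. Applying it to $T_\xi$ gives
\[
  \lim_{\substack{\hbar\to 0\\ \hbar\in J}} h_{\cv{\js(\mathcal{F})}}(\xi) \,=\, \sup_M \pscal{\xi}{F} \,=\, h_{\cv{F(M)}}(\xi),
\]
for every $\xi\in\R^d$. Thus the support function of $\cv{F(M)}$ is recovered pointwise as the $\hbar\to 0$ limit of the support functions of the given convex hulls, and the support function uniquely determines the closed convex hull, as recalled in Section~\ref{sec:convexity}. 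This gives the desired recovery procedure.

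The only genuinely nontrivial ingredient is the upgrade from Step~2 to Step~3, namely proving that $\sup\sigma(T_\xi)\to \sup_M \pscal{\xi}{F}$ as $\hbar\to 0$. The ``$\leq$'' half follows from quasi-positivity (Q\ref{item:garding}) applied to the nonnegative symbol $\sup_M\pscal{\xi}{F} - \pscal{\xi}{F}$ (plus \eqref{equ:rayleigh}), while the ``$\geq$'' direction uses localisation by a compactly supported $\chi\in\mathcal{A}_0$ concentrated near a near-maximiser, together with the product formula (Q\ref{item:symbolic}) and the non-degeneracy axiom (Q\ref{item:norm}) to produce test vectors whose Rayleigh quotients approach $\sup_M\pscal{\xi}{F}$. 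This is exactly what the key lemma of Section~\ref{sec:limits} is designed to handle, and the main obstacle is the possible unboundedness of the principal symbol in the pseudodifferential case, which is why one truncates by a compactly supported cutoff.
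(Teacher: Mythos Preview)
Your proposal is correct and follows essentially the same route as the paper: reduce to support functions, form the linear combination $T_\xi$, identify $\sup\sigma(T_\xi)$ with the support function of the joint spectrum via Lemma~\ref{lem:btcommuting}, and then invoke Lemma~\ref{lem:max} to pass to the limit and recover $\Phi_{F(M)}$, hence $\cv{F(M)}$ by Lemma~\ref{lem:duality}. One minor imprecision: the spectrum of $T_\xi$ is not literally the image of $\js(\mathcal F)$ under $x\mapsto\pscal{\xi}{x}$ but rather the support of the pushforward spectral measure (the image is only dense in it); this does not affect the suprema, so your conclusion stands.
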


We can strengthen Theorem~\ref{theo:general} in case the principal
symbols are bounded to obtain a uniform convergence in the Hausdorff
distance.

Recall that the \emph{Hausdorff distance} between two subsets $A$ and
$B$ of $\R^d$ is
 $$
 {\rm d}_H(A,\,B):= \inf\{\epsilon > 0\,\, | \,\ A \subseteq
 B_\epsilon \ \mbox{and}\ B \subseteq A_\epsilon\},
$$
where for any subset $X$ of $\R^d$, the set $X_{\epsilon}$ is
$$X_\epsilon := \bigcup_{x \in X} \{m \in \R^d\, \, | \,\, \|x - m \|
\leq \epsilon\},$$ (see eg. \cite{BuBuIv2001}).

\begin{theorem} \label{theo:bounded} Let $M$ be a quantizable
  manifold.  Let $d\geq 1$ and let $(T_1,\dots T_d)$ be pairwise
  commuting semiclassical operators on $M$. Let $J$ be a subset of $I$
  that accumulates at $0$.  Assume that the principal symbols of
  $T_j$, $j=1,\dots,d$, are bounded. Let
  $\mathcal{S}\subset\R^d$ be the classical spectrum of $(T_1,\dots,
  T_d)$.  Then
$$
\lim_{\hbar \rightarrow 0} {\rm Convex\,\, Hull}\,
\Big(\textup{JointSpec}(T_1,\dots, T_d) \Big) \,=\, \cv{\mathcal{S}}
$$
where the limit convergence is in the Hausdorff metric.
\end{theorem}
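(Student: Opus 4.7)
The plan is to dualize the problem via support functions: Hausdorff convergence of uniformly bounded compact convex sets in $\R^d$ is equivalent to uniform convergence of their support functions $h_K(\xi) := \sup_{x \in K}\langle \xi,x\rangle$ on the unit sphere $S^{d-1}$. I would then reduce the question to a one-parameter spectral limit in each direction $\xi$, invoke the key lemma from Section \ref{sec:limits}, and upgrade pointwise to uniform convergence using equi-Lipschitz continuity of support functions (Section \ref{sec:convexity}).

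First, since each principal symbol $f_j$ is bounded, estimate \eqref{eq-norm-bounds-vsp} yields $\norm{T_j} \leq \norm{f_j} + \cO(\hbar)$, so there exist $R>0$ and $\hbar_0>0$ such that for every $\hbar \in J$ with $\hbar < \hbar_0$ both $\mathcal{S}$ and $\textup{JointSpec}(T_1,\dots,T_d)$ are contained in the closed ball $\overline{B(0,R)} \subset \R^d$. It therefore suffices to prove
\[
\lim_{\hbar\to 0} h_{\textup{JointSpec}(T_1,\dots,T_d)}(\xi) = h_{\mathcal{S}}(\xi) \qquad \text{uniformly for } \xi \in S^{d-1}.
\]

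Next, fix $\xi \in S^{d-1}$ and set $T_\xi := \sum_{j=1}^d \xi_j T_j$, which is a self-adjoint semiclassical operator with bounded principal symbol $f_\xi := \sum_{j=1}^d \xi_j f_j$. The joint spectral theorem for commuting self-adjoint operators gives
\[
h_{\textup{JointSpec}(T_1,\dots,T_d)}(\xi) = \sup\{\pscal{\xi}{\lambda}:\lambda\in\textup{JointSpec}(T_1,\dots,T_d)\} = \sup\sigma(T_\xi),
\]
while on the classical side $h_{\mathcal{S}}(\xi) = \sup_{p\in M} f_\xi(p)$. The key spectral limit lemma of Section \ref{sec:limits} applied to $T_\xi$ then yields $\lim_{\hbar \to 0}\sup\sigma(T_\xi) = \sup_M f_\xi$ for each fixed $\xi$. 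The upper bound is immediate from \eqref{eq-gard-cor} with $c_2 = \sup f_\xi$ combined with \eqref{equ:rayleigh}; the lower bound is the substantive step, proved by localising near a near-maximiser of $f_\xi$ with a compactly supported cutoff $\chi\ge 0$ on which $f_\xi \geq \sup f_\xi - \varepsilon$, then combining quasi-positivity (Q\ref{item:garding}), the product formula (Q\ref{item:symbolic}) to relate $\OP(f_\xi\chi^2)$ with $\OP(\chi)T_\xi\OP(\chi)$, and non-degeneracy (Q\ref{item:norm}) to supply test vectors of the form $\OP(\chi)u$ whose Rayleigh quotient approaches $\sup f_\xi$ as $\hbar\to 0$.

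Finally, to upgrade pointwise to uniform convergence on $S^{d-1}$, observe that the support function of any subset of $\overline{B(0,R)}$ is $R$-Lipschitz; hence the family $\{h_{\textup{JointSpec}(T_1,\dots,T_d)}\}_{\hbar<\hbar_0}$ together with $h_\mathcal{S}$ is equi-Lipschitz on the compact set $S^{d-1}$, and pointwise convergence of such a family is automatically uniform. This yields Hausdorff convergence of the convex hulls and completes the proof. The principal obstacle is the spectral lower bound in the pointwise step, which is precisely the content of the key lemma of Section \ref{sec:limits}; once that is granted, everything else is a routine exercise in convex analysis and the reduction through the linear form $\xi$.
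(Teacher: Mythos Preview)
Your proposal is correct and follows essentially the same route as the paper: reduce to support functions on $S^{d-1}$, use Lemma~\ref{lem:btcommuting} to identify $\Phi_{\textup{JointSpec}}(\alpha)$ with $\sup\sigma(T^{(\alpha)})$, invoke Lemma~\ref{lem:max} for pointwise convergence, and then use the equi-Lipschitz bound of Proposition~\ref{prop:phi}(iii) on the compact sphere to upgrade to uniform convergence and hence Hausdorff convergence via Proposition~\ref{prop:phi}(i),(ii). Your outline is in fact slightly more explicit than the paper's in justifying the uniform boundedness of the joint spectrum and in summarizing the mechanism of the lower bound in Lemma~\ref{lem:max}.
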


\medskip
\noindent Theorem \ref{theo:general} and Theorem \ref{theo:bounded} together
with results of Sections \ref{subsec-bt-prelim} and
\ref{subsec-pdo-prelim} below readily yield Theorems
\ref{theo:toeplitz} and \ref{theo:pseudodifferential} of the
introduction.

\medskip

\begin{remark} 
  Let $T$ be a uniformly (in $\hbar$) bounded selfadjoint operator
  with principal symbol $f$. Applying Theorem~\ref{theo:bounded} with
  $d=1$, we get that $$ \lim_{\hbar \rightarrow 0} [\lambda_{\min}(T),
  \lambda_{\max}(T)] \,=\, {\mathcal{S}} = f(M),
$$
where $\lambda_{\min}(T)$ and $\lambda_{\max}(T)$ are the minimum and
maximum of the spectrum of $T$, respectively. Since by
\eqref{equ:rayleigh-1} the operator norm of a bounded selfadjoint
operator equals $\max (\abs{\lambda_{\min}},\abs{\lambda_{\max}})$, we
get that the axioms of semiclassical quantization listed in Definition
\ref{BT} above imply the following ``automatic" refinement of
non-degeneracy axiom (Q\ref{item:norm}):
\begin{equation} \label{eq-norm-improved}
\lim_{\hbar \to 0} \norm{\OP(f)} = \norm{f}
\end{equation}
for every bounded function $f \in \mathcal{A}_0$.
\end{remark}

\medskip

\begin{cor} \label{cor:im} If the classical spectrum is convex, then
  the joint spectrum recovers it.
\end{cor}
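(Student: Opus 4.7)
The plan is to observe that this corollary is essentially immediate from Theorem~\ref{theo:general} (or Theorem~\ref{theo:bounded} in the bounded case), once one unwinds the definition of the classical spectrum. Recall that $\mathcal{S} = F(M) \subset \R^d$ where $F = (f^{(1)}, \ldots, f^{(d)})$ collects the principal symbols of $(T_1, \ldots, T_d)$. The convexity hypothesis means exactly that $\cv{\mathcal{S}} = \mathcal{S}$.

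First, I would apply Theorem~\ref{theo:general}: from the family
\[
\Big\{\cv{\textup{JointSpec}(T_1,\dots,T_d)}\Big\}_{\hbar \in J}
\]
one recovers $\cv{\mathcal{S}}$, and by the convexity assumption this is $\mathcal{S}$ itself. Since convex hulls are constructed directly from the joint spectrum at each $\hbar$, this means $\mathcal{S}$ can be read off from the family of joint spectra, which is precisely the statement ``the joint spectrum recovers it''.

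There is no real obstacle here: the whole content of the corollary is the conjunction of Theorem~\ref{theo:general} with the tautology $\cv{\mathcal{S}} = \mathcal{S}$ for convex $\mathcal{S}$. If one wants the stronger conclusion that the convergence is uniform in the Hausdorff metric (applicable when the principal symbols are bounded), one instead invokes Theorem~\ref{theo:bounded} to obtain
\[
\lim_{\hbar \to 0} \cv{\textup{JointSpec}(T_1,\dots,T_d)} = \cv{\mathcal{S}} = \mathcal{S}
\]
in the Hausdorff metric, which again yields the claim. The proof is therefore a one\--line deduction and requires no further ingredients beyond the theorems already established.
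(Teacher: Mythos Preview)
Your proposal is correct and matches the paper's approach: the paper simply states that Corollary~\ref{cor:im} is an immediate consequence of Theorem~\ref{theo:general}, which is exactly the one-line deduction you describe.
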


Corollary \ref{cor:im} is an immediate consequence of Theorem
\ref{theo:general}. Note that we don't need to know the precise
structure of the joint spectrum of the quantum system in order to
recover the classical spectrum -- it suffices to know the convex hull
of this joint spectrum, as a subset of $\mathbb{R}^d$.

\section{Spectral limits} \label{sec:limits}

Fix a semiclassical quantization the sense of Definition~\ref{BT} on a
manifold $M$.

\begin{lemma} \label{lem:max} Take any $\f= (f_\hbar)
  \in\mathcal{A}_I$ with principal part $f_0$, and let
  $(\OP(f_\hbar))$ be the corresponding semiclassical operator.  Let
  $\lambda_{\sup}(\hbar)$ denote the supremum of the spectrum of
  $\OP(f_\hbar)$.  Then
  \begin{eqnarray} \label{eq:0} \lim_{\hbar \to 0}
    \lambda_{\sup}(\hbar)= \sup_M f_0.
  \end{eqnarray}
\end{lemma}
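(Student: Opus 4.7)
\medskip
\noindent\textbf{Proof plan for Lemma \ref{lem:max}.}

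\emph{Reduction to the principal symbol.} First I would write $f_\hbar = f_0 + \hbar f_{1,\hbar}$ with $f_{1,\hbar}$ uniformly bounded and compactly supported, and apply \eqref{eq-norm-bounds-vsp} to conclude $\|\hbar\,\OP(f_{1,\hbar})\| = \mathcal{O}(\hbar)$. By Weyl's perturbation estimate (or directly from \eqref{equ:rayleigh}), the supremum of the spectrum moves by at most $\mathcal{O}(\hbar)$ under an $\mathcal{O}(\hbar)$ self-adjoint perturbation, so it suffices to prove $\lim_{\hbar\to 0}\lambda_{\sup}(\OP(f_0))=\sup_M f_0$. Set $M_0:=\sup_M f_0\in(-\infty,+\infty]$.

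\emph{Upper bound (when $M_0<\infty$).} The function $M_0-f_0$ is nonnegative, so the quasi-positivity axiom (Q\ref{item:garding}) yields $\OP(M_0-f_0)\geq -C\hbar$. Combining this with the normalization axiom (Q\ref{item:one}) and $\R$-linearity gives $\OP(f_0)\leq M_0\cdot\Id + \mathcal{O}(\hbar)$, and then \eqref{equ:rayleigh} forces $\lambda_{\sup}(\OP(f_0))\leq M_0+\mathcal{O}(\hbar)$.

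\emph{Lower bound via localization.} This is where the main work lies, because $f_0$ may be unbounded and we cannot apply the spectral theorem directly. Fix $\epsilon>0$ (or an arbitrary level $N$ in the case $M_0=+\infty$) and pick $x_0\in M$ with $f_0(x_0)>M_0-\epsilon$. Choose a smooth, compactly supported $\chi\geq 0$ in $\mathcal{A}_0$, not identically zero, supported in a neighborhood of $x_0$ on which $f_0\geq M_0-\epsilon$. Then $\chi^2(f_0-(M_0-\epsilon))\geq 0$ is compactly supported, so (Q\ref{item:garding}) gives
\begin{equation*}
  \OP(\chi^2 f_0)\;\geq\;(M_0-\epsilon)\,\OP(\chi^2)\;-\;\mathcal{O}(\hbar).
\end{equation*}
On the other hand, applying the product formula \eqref{eq-symbolic-prime} twice, with $\chi$ compactly supported to guarantee boundedness of each factor, gives
\begin{equation*}
  \OP(\chi)\,\OP(f_0)\,\OP(\chi)\;=\;\OP(\chi^2 f_0)\;+\;\mathcal{O}(\hbar).
\end{equation*}

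\emph{Constructing the test vector.} The non-degeneracy axiom (Q\ref{item:norm}) gives $\|\OP(\chi)\|\geq c>0$ for all small $\hbar$. Since $\OP(\chi)^2=\OP(\chi^2)+\mathcal{O}(\hbar)$ by the product formula and $\OP(\chi)$ is self-adjoint, $\|\OP(\chi^2)\|\geq c^2-\mathcal{O}(\hbar)$. Because $\chi^2\geq 0$, axiom (Q\ref{item:garding}) forces the spectrum of $\OP(\chi^2)$ to lie in $[-\mathcal{O}(\hbar),\infty)$, hence for small $\hbar$ its \emph{largest} spectral value is $\geq c^2/2$. By \eqref{equ:rayleigh} there is a unit vector $u_\hbar$ with $\pscal{\OP(\chi^2)u_\hbar}{u_\hbar}\geq c^2/4$. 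Setting $v_\hbar:=\OP(\chi)u_\hbar$, one computes $\|v_\hbar\|^2=\pscal{\OP(\chi^2)u_\hbar}{u_\hbar}+\mathcal{O}(\hbar)$, and combining the two displayed inequalities above yields
\begin{equation*}
  \frac{\pscal{\OP(f_0)v_\hbar}{v_\hbar}}{\pscal{v_\hbar}{v_\hbar}}\;\geq\;(M_0-\epsilon)\;-\;\mathcal{O}(\hbar),
\end{equation*}
since the denominator is bounded below by $c^2/8$ for small $\hbar$. By \eqref{equ:rayleigh}, $\lambda_{\sup}(\OP(f_0))\geq M_0-\epsilon-\mathcal{O}(\hbar)$. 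Letting $\hbar\to 0$ and then $\epsilon\to 0$ (resp. $N\to\infty$) completes the proof.

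\emph{Main obstacle.} The delicate step is the lower bound when $f_0$ is unbounded: one cannot feed a spectral-theoretic eigenvector of $\OP(f_0)$ directly into \eqref{equ:rayleigh} because $\OP(f_0)$ itself need not be bounded and its spectrum is a priori unrelated to $\sup_M f_0$. The localization $v_\hbar=\OP(\chi)u_\hbar$, together with the three axioms (Q\ref{item:garding}), (Q\ref{item:norm}), (Q\ref{item:symbolic}) working in concert, is what rescues the argument.
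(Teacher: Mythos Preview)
Your proposal is correct and follows essentially the same strategy as the paper: reduce to the principal symbol via \eqref{eq-cO}, get the upper bound from quasi-positivity (Q\ref{item:garding}) and normalization (Q\ref{item:one}), and for the lower bound localize with a cutoff $\chi$ supported where $f_0$ is within $\epsilon$ of its supremum, then combine (Q\ref{item:garding}), (Q\ref{item:norm}), (Q\ref{item:symbolic}) to manufacture a test vector whose Rayleigh quotient is at least $\sup_M f_0-\epsilon-\mathcal{O}(\hbar)$.

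The only technical difference lies in how the test vector is built. The paper introduces a \emph{second} cutoff $\tilde\chi$ with $\chi\tilde\chi=\tilde\chi$, sets $u:=\OP(\tilde\chi)v/\|\OP(\tilde\chi)v\|$, and uses the product formula to obtain $\OP(\chi)u=u+\mathcal{O}(\hbar)$; this yields $w:=\OP(\chi)u$ with $\|w\|=1+\mathcal{O}(\hbar)$, so the Rayleigh denominator is essentially~$1$. You instead use a single cutoff, observe that $\OP(\chi^2)=\OP(\chi)^2+\mathcal{O}(\hbar)$ together with (Q\ref{item:norm}) and (Q\ref{item:garding}) forces $\lambda_{\max}(\OP(\chi^2))\geq c^2/2$, pick $u_\hbar$ with $\pscal{\OP(\chi^2)u_\hbar}{u_\hbar}\geq c^2/4$, and work with $v_\hbar=\OP(\chi)u_\hbar$ whose norm is only bounded below. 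Your route is marginally more economical (one cutoff, no nesting), while the paper's nested-cutoff trick gives a cleaner normalization; both yield the same conclusion with the same ingredients.
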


\begin{proof}
  For clarity we divide the proof into several steps.

  \paragraph{\emph{Step 1}.} By \eqref{equ:rayleigh} and \eqref{eq-cO}
$$ \lambda_{\sup}(\hbar)=  \sup_{\norm{u}=1}{\pscal{\OP(f_\hbar) u}{u}}=
\sup_{\norm{u}=1}{\pscal{\OP(f_0) u}{u}}=
\lambda_{\sup}(\OP(f_0))\;.$$ Therefore it suffices to prove the lemma
assuming that $f_\hbar =f_0$ for all $\hbar$.  From now on, this will
be the standing assumption until the end of the proof.

Further, fix $\epsilon >0$ sufficiently small. We claim that
\begin{equation}
  \lambda_{\sup}(\hbar) \leq \sup_M f_0 + \epsilon
  \label{equ:supless}
\end{equation}
for all $\hbar$ sufficiently small.  Indeed, if $f_0$ is unbounded
from above there is nothing to prove. If $f_0$ is bounded from above,
this follows from \eqref{eq-gard-cor}.

\paragraph{\emph{Step 2}.} Put
\begin{equation}
  \label{equ:F}
  F_\epsilon :=
  \begin{cases}
    \sup_M f_0 - \epsilon &\text{ if } f_0 \text{ is bounded from above;}\\
    1/\epsilon &\text{ otherwise}.
  \end{cases}
\end{equation}

Let $K$ be a compact set with non-empty interior and such that
$$
f_0|_K \geq F_\epsilon.
$$
Let $\chi \ge 0$ be a smooth function which is identically $0$ outside
of $K$, identically $1$ in a compact $\tilde{K} \subset \mathring{K}$
with non\--empty interior. Then the function
$$
( f_0 - F_\epsilon ) \,\, \chi
$$
is $0$ outside of $K$, and, inside of $K$, it is greater than or equal
to $0$.  Then, by Axiom (Q\ref{item:garding}), there
exists a positive constant $C_{\epsilon}$ such that
\begin{eqnarray} \label{eq:-1} \langle \OP((f_0 - F_\epsilon)
  \chi^2)u,\, u \rangle \geq -C_\epsilon\hbar.
\end{eqnarray}
As above, in what follows $C_\epsilon$ denotes a positive constant
which does not depend on $\hbar$ and whose value may vary from step to
step.

\paragraph{\emph{Step 3}.} We claim that there exists $u \in
\mathcal{H}_{\hbar}$ such that $\| u \|=1$ and
\begin{eqnarray} \label{eq:u} u = \OP(\chi) u +\mathcal{O}(\hbar).
\end{eqnarray}
Indeed, let $\tilde{\chi}$ be supported on $\tilde{K}$ and non
identically $0$. By Axiom (Q\ref{item:norm}) there exists a constant
$c>0$ such that $\| \OP(\tilde{\chi}) \| \ge c >0$.  Therefore there
exists some $v \in \mathcal{H}_{\hbar}$ with $\|v\|=1$ and such that
\begin{equation}\label{eq-c}
  \|
  \OP(\tilde{\chi}) v \| > \frac{c}{2}\;.
\end{equation}
Now let
$$
u=\frac{\OP(\tilde{\chi}) v}{\| \OP(\tilde{\chi}) v \|}.
$$
By the product formula (Axiom~(Q\ref{item:symbolic})) we know that
\[
\OP(\chi \tilde \chi)=\OP(\chi) \circ
\OP(\tilde{\chi})+\mathcal{O}(\hbar),
\]
and therefore
\begin{eqnarray} \label{eq:3} \OP(\chi)u=
  \frac{\OP(\chi)\OP(\tilde{\chi})v}{\|\OP(\tilde{\chi}) v \|} =
  \frac{\OP(\chi \tilde{\chi})v} {\| \OP(\tilde{\chi}) v \|} +
  \mathcal{O}(\hbar).
\end{eqnarray}
In the second equality above we use \eqref{eq-c}. Since $\tilde{\chi} \chi
=\tilde{\chi}$, it follows from equation (\ref{eq:3}) that
$$
\OP(\chi)u=\frac{\OP(\tilde{\chi})v} {\|\OP(\tilde{\chi}) v \|} +
\mathcal{O}(\hbar)= u +\mathcal{O}(\hbar),
$$
which proves the claim (\ref{eq:u}).

\paragraph{\emph{Step 4}.} Put $w:= \OP(\chi)u$, where $u$ is from Step 3.
By selfadjointness and Axiom~(Q\ref{item:symbolic})
\begin{eqnarray}
  \langle \OP(f_0 -F_\epsilon) w,\, w \rangle &=&
  \langle \OP(\chi)\circ\OP(f_0 -F_\epsilon)\circ\OP(\chi)u,u\rangle \nonumber \\
  &=&
  \langle \OP((f_0 -F_\epsilon)\chi^2)u,u\rangle + \mathcal{O}(\hbar) \geq -C_\epsilon \hbar\;, \nonumber
\end{eqnarray}
where the last inequality follows from \eqref{eq:-1}.  Using
(Q\ref{item:one}) and the fact that $\|w\|=1 +\cO(\hbar)$ by
\eqref{eq:u}, we conclude that
$$
\lambda_{\sup}(\hbar) \geq F_\epsilon -C_{\epsilon} \hbar.
$$
Now, if $\hbar$ is small enough then $C_{\epsilon} \hbar < \epsilon$,
and hence, in view of~\eqref{equ:F},
$$
\lambda_{\sup}(\hbar) \geq \sup_M f_0 -\epsilon\;,
$$
if $\sup_M f_0 < +\infty$ and
$$
\lambda_{\sup}(\hbar) \geq \epsilon^{-1} -\epsilon\;,
$$
if $\sup_M f_0 = +\infty$.  Since $\epsilon>0$ is arbitrary, this
together with~\eqref{equ:supless} implies (in both cases) that
$$
\lim_{\hbar \to 0} \lambda_{\sup}(\hbar)= \sup_M f_0,
$$
as required
\end{proof}

\section{Detecting convexity} \label{sec:convexity}

\begin{figure}[h]
  \centering
  \includegraphics[width=0.6\textwidth]{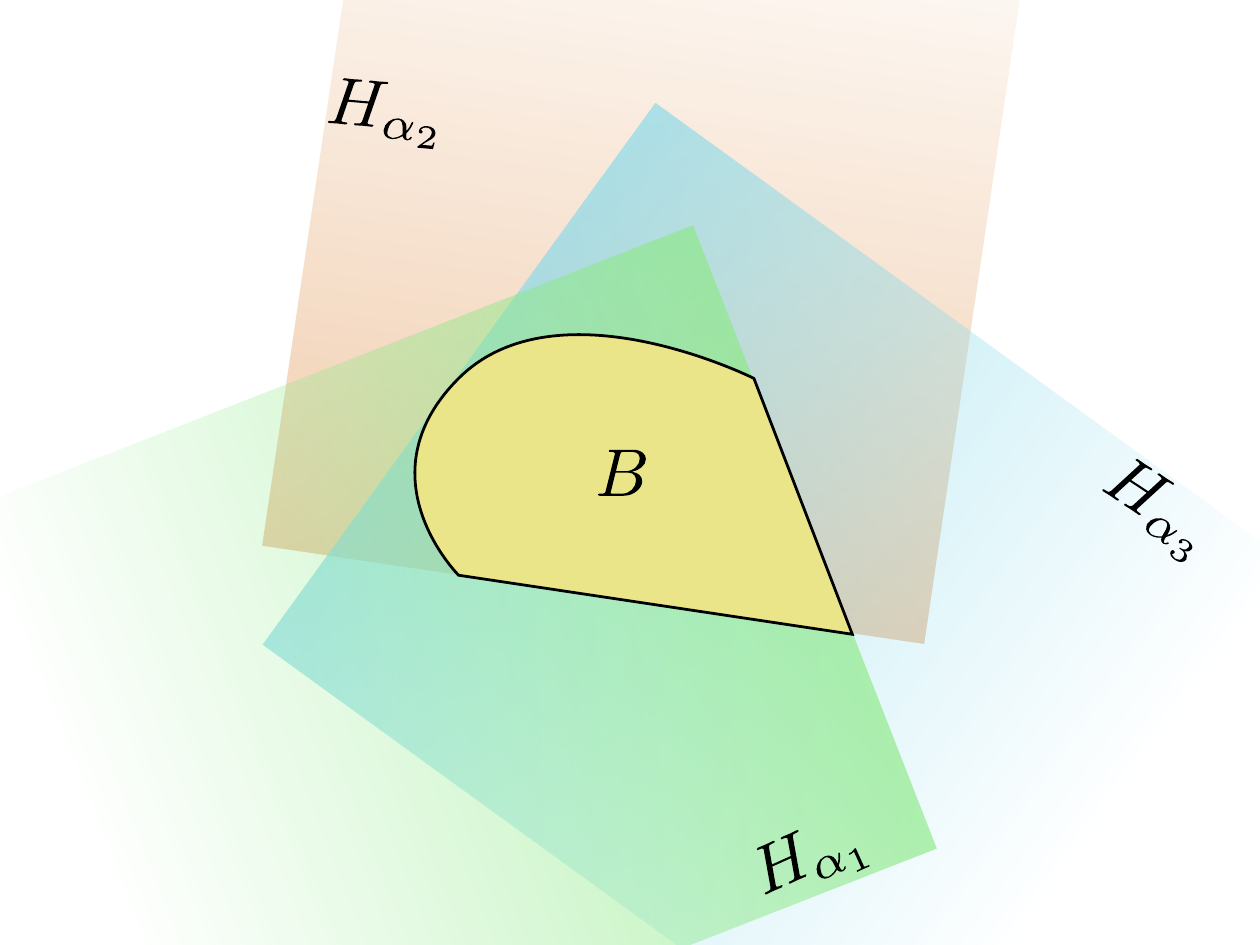}
  \caption{Lemma \ref{lem:duality} in the case of $2$ dimensions.}
  \label{fig:2dlem}
\end{figure}

Let $B \subseteq \mathbb{R}^d$ be a closed set. Let $S^{d-1}$ be the
unit sphere in $\mathbb{R}^d$.  The map $ \Phi_B \colon S^{d-1} \to
\mathbb{R} $ given by
\[
\Phi_B(\alpha):=\sup_{x \in B} \langle x,\, \alpha \rangle \in \R \cup
\{+\infty\} \] is called {\it the support function} of $B$ (here we
deviate a little bit from the standard definition where $\Phi_B$ is
defined on the whole $\R^d$).  The following facts are well known (see
e.g. \cite[Section 7.2]{Bauschke-Combettes}):

\begin{lemma} \label{lem:duality}\begin{itemize}
  \item[{(i)}] $ {\rm Convex\,\, Hull}\,(B) = \bigcap_{\alpha \in
      S^{d-1}(E)} \{x \in \mathbb{R}^d \, | \, \langle x, \alpha
    \rangle \leq \Phi_B(\alpha)\}$.
  \item[{(ii)}] $\Phi_B=\Phi_{{\rm Convex\,\, Hull}\,(B)}$.
  \end{itemize}

\end{lemma}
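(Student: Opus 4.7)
The plan is to treat part (ii) first, because it will feed directly into part (i). For part (ii), observe that since $B \subseteq \operatorname{Convex\, Hull}(B)$, the inequality $\Phi_B(\alpha) \le \Phi_{\operatorname{Convex\, Hull}(B)}(\alpha)$ is immediate. For the opposite inequality, write an arbitrary point $y \in \operatorname{Convex\, Hull}(B)$ as a finite convex combination $y = \sum_i t_i x_i$ with $x_i \in B$, $t_i \ge 0$, $\sum_i t_i = 1$. By linearity of $\langle \cdot, \alpha\rangle$ we get $\langle y,\alpha\rangle = \sum_i t_i \langle x_i,\alpha\rangle \le \Phi_B(\alpha)$, and taking the supremum over $y$ yields $\Phi_{\operatorname{Convex\, Hull}(B)}(\alpha) \le \Phi_B(\alpha)$.

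For part (i), denote by $C(B)$ the intersection appearing on the right-hand side. The inclusion $\operatorname{Convex\, Hull}(B) \subseteq C(B)$ is the easy direction: for every $\alpha \in S^{d-1}$ the half-space $H_\alpha := \{x : \langle x,\alpha\rangle \le \Phi_B(\alpha)\}$ is closed, convex, and contains $B$ by the very definition of $\Phi_B$; hence it contains the convex hull of $B$, and so does the intersection $C(B)$ of all such $H_\alpha$.

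The opposite inclusion $C(B) \subseteq \operatorname{Convex\, Hull}(B)$ is the main point and is where I would invoke the Hahn--Banach separation theorem in its geometric form in $\R^d$. Suppose $x_0 \notin \operatorname{Convex\, Hull}(B)$. The set $\operatorname{Convex\, Hull}(B)$ is convex, and (combining with part (ii)) the intersection $C(B)$ is the intersection of closed half-spaces containing $\operatorname{Convex\, Hull}(B)$, so it coincides with the closure of $\operatorname{Convex\, Hull}(B)$; I would therefore first reduce to the case where $x_0$ is not in the closure either, since any interior boundary issues cause no trouble. Then Hahn--Banach provides a nonzero $\alpha \in \R^d$ and a real $c$ such that $\langle x_0,\alpha\rangle > c \ge \langle y,\alpha\rangle$ for every $y \in \operatorname{Convex\, Hull}(B)$. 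Normalising $\alpha$ to lie on $S^{d-1}$ and using part (ii) gives $\langle x_0,\alpha\rangle > \Phi_B(\alpha)$, which means $x_0 \notin H_\alpha$ and hence $x_0 \notin C(B)$. Contrapositively, $C(B) \subseteq \overline{\operatorname{Convex\, Hull}(B)}$; the equality asserted in (i) is then understood modulo this topological closure, as is standard in the convex-analysis references cited.

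The step I expect to require the most care is precisely this separation step, specifically the small subtlety that when $B$ is only assumed closed (and possibly unbounded), the convex hull need not itself be closed, so the equality in (i) really holds between $C(B)$ and the closure of $\operatorname{Convex\, Hull}(B)$. In the applications of this lemma in the present paper, however, one applies it either to compact sets (the joint spectra, once restricted to $J$ accumulating at $0$, and the classical spectrum in the bounded setting) or inside arguments that only use the support function identity of part (ii), so this topological nuance does not affect the subsequent use of the lemma.
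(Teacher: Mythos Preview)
The paper does not actually prove this lemma: it records both statements as well known and cites \cite[Section 7.2]{Bauschke-Combettes}. Your argument is the standard one and is correct. In particular, your observation about closures is accurate: the right-hand side of (i) is an intersection of closed half-spaces, hence closed, so for a merely closed (possibly unbounded) $B$ the identity in (i) literally yields the \emph{closed} convex hull of $B$. You handle this correctly, and you are also right that this nuance is harmless for the paper's applications, where either the sets involved are compact or only part (ii) is used.
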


\begin{prop} \label{prop:phi} $\;$
  \begin{itemize}
  \item[{(i)}] Let $A$ and $B$ be closed sets. Then we have the
    following equivalence~:

$$  \left(\Phi_A \leq \Phi_B\right) \Longleftrightarrow
\left(\cv{A} \subset \cv{B}\right).$$

\item[{(ii)}] Let $A$ be a convex closed set. Let $\epsilon>0$. Then
$$
\Phi_A +\epsilon \,\, = \,\, \Phi_{A+\overline{{\rm B}(0,\epsilon)}}
$$
\item[{(iii)}] Let $B$ be a compact set, and let $C\geq 0$ such that
  for all $x \in B, \quad \norm{x}\leq C$. Then $\Phi_B$ is
  $C$-Lipschitz.
\end{itemize}
\end{prop}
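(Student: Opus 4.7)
The plan is to treat the three parts in sequence, each as a short consequence of Lemma~\ref{lem:duality} together with elementary monotonicity and linearity of the supremum; I do not expect any serious obstacle.

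For part (i), I would first dispatch the easy direction: monotonicity of the supremum gives $\Phi_{A_1}\leq\Phi_{A_2}$ whenever $A_1\subset A_2$, so if $\cv{A}\subset\cv{B}$ then $\Phi_{\cv{A}}\leq\Phi_{\cv{B}}$, and Lemma~\ref{lem:duality}(ii), which identifies $\Phi_A=\Phi_{\cv{A}}$ and similarly for $B$, upgrades this to $\Phi_A\leq\Phi_B$. For the converse, if $\Phi_A\leq\Phi_B$ pointwise on $S^{d-1}$, then each half\--space $\{x:\langle x,\alpha\rangle\leq\Phi_A(\alpha)\}$ is contained in $\{x:\langle x,\alpha\rangle\leq\Phi_B(\alpha)\}$, so intersecting over $\alpha$ and invoking Lemma~\ref{lem:duality}(i) yields $\cv{A}\subset\cv{B}$.

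For part (ii), I would compute the support function of the Minkowski sum directly: for $\alpha\in S^{d-1}$,
\[
\Phi_{A+\overline{{\rm B}(0,\epsilon)}}(\alpha)
=\sup_{x\in A,\ \norm{y}\leq\epsilon}\langle x+y,\alpha\rangle
=\Phi_A(\alpha)+\sup_{\norm{y}\leq\epsilon}\langle y,\alpha\rangle
=\Phi_A(\alpha)+\epsilon,
\]
since $\sup_{\norm{y}\leq\epsilon}\langle y,\alpha\rangle=\epsilon\norm{\alpha}=\epsilon$ for $\alpha$ on the unit sphere. It is worth noting that the hypothesis that $A$ be convex and closed plays no role in this identity; it is listed in the statement merely for the way the lemma will be applied later.

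For part (iii), I would argue via Cauchy--Schwarz pointwise in $B$: for $\alpha,\alpha'\in S^{d-1}$ and $x\in B$,
\[
\langle x,\alpha\rangle
\leq\langle x,\alpha'\rangle+\norm{x}\,\norm{\alpha-\alpha'}
\leq\langle x,\alpha'\rangle+C\norm{\alpha-\alpha'},
\]
so taking the supremum over $x\in B$ gives $\Phi_B(\alpha)\leq\Phi_B(\alpha')+C\norm{\alpha-\alpha'}$, and symmetry finishes the argument. The only small point to keep in mind throughout is that working on the unit sphere (as opposed to all of $\R^d$) makes the dual norm of every test direction equal to $1$, which is what produces the clean additive constant in part (ii) and the Lipschitz constant $C$ in part (iii).
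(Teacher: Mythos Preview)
Your proof is correct and follows essentially the same route as the paper's: both directions of (i) via Lemma~\ref{lem:duality}, (ii) by computing the support function of the Minkowski sum (you do it in one line, the paper splits it into two inequalities), and (iii) via Cauchy--Schwarz on $\langle x,\alpha-\alpha'\rangle$. Your remark that the convexity/closedness hypothesis in (ii) is not actually used is accurate.
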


\begin{proof}
  % \begin{enumerate}
  
  \medskip
  \noindent {\bf (i)} If $\Phi_A\leq \Phi_B$, then
  Lemma~\ref{lem:duality}(i) gives $\cv{A} \subset
  \cv{B}$. Conversely, if $\cv{A} \subset \cv{B}$, then by definition
  of the maps $\Phi$, we have $\Phi_{\cv{A}}\leq \Phi_{\cv{B}}$. We
  conclude by Lemma \ref{lem:duality}(ii).
    
  \medskip
  \noindent {\bf (ii)} If $x\in A$ and $b\in \overline{{\rm
      B}(0,\epsilon)}$, we have
  \[
  \pscal{\alpha}{x+b} \leq \pscal{\alpha}{x} + \epsilon,
  \]
  and hence $\Phi_{A+\overline{{\rm B}(0,\epsilon)}} \leq \Phi_A
  +\epsilon$. Conversely, note that $\epsilon =\langle \alpha,
  \epsilon\alpha \rangle$. Therefore
  \begin{eqnarray}
    \langle \alpha, x \rangle +\epsilon &=& \langle \alpha, x \rangle + \langle \alpha, \epsilon \alpha \rangle \nonumber  \\
    &=&\langle \alpha, \underbrace{x+\epsilon \alpha}_{\in A +
      \overline{{\rm B}(0,\epsilon)}}  \rangle \nonumber \\
    &\leq& \sup_{x \in A+\overline{{\rm B}(0,\epsilon)}}  \pscal{\alpha}{x}
  \end{eqnarray}
  which concludes the proof.
    
  \medskip
  \noindent {\bf (iii)} If $\alpha,\alpha'\in S^{d-1}$, we have
  \[
  \abs{ \pscal{x}{\alpha} -\pscal{x}{\alpha'} } \leq \|x\|
  \|\alpha-\alpha'\|,
  \]
  which easily implies
$$
|\Phi_B(\alpha)-\Phi_B(\alpha') | \leq C \| \alpha-\alpha'\|\;.
$$
\end{proof}

\section{Proof of Theorems \ref{theo:general} and
  \ref{theo:bounded}} \label{sec:proof}

\begin{lemma} \label{lem:btcommuting} Let $T_1,\dots,T_d$ be pairwise
  commuting selfadjoint (possibly unbounded) operators on a Hilbert
  space. For any $\alpha\in S^{d-1}$, let $\sigma(\alpha)$ be the
  spectrum of $T^{(\alpha)}:=\sum_{j=1}^d \alpha_j\, T_j$.  Then for
  any fixed $\alpha$,
  \begin{eqnarray}
    && \sup  \sigma(\alpha)
    = \sup \{ \langle x,\,  \alpha \rangle \, | \, x \in {\rm JointSpec}(T_1,\ldots, T_d)\} \label{equ:psi-JS}\\
    &&= \sup \{ \langle x,\,  \alpha \rangle \, | \, x \in  {\rm Convex\,\, Hull}\,({\rm JointSpec}(T_1,\ldots, T_d))\}.  \nonumber
  \end{eqnarray}
\end{lemma}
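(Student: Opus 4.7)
The plan is to use the joint spectral theorem for commuting self-adjoint families to reduce both equalities to elementary facts about linear functionals.

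First I would invoke the joint spectral theorem: since $T_1,\dots,T_d$ strongly commute (which, in the semiclassical setting of this paper, is the natural strengthening of commutation on a common invariant dense domain), they admit a projection\--valued joint spectral measure $E$ on $\mathbb{R}^d$ whose support is by definition $\op{JointSpec}(T_1,\dots,T_d)$. Via the resulting multi\--variable functional calculus, one has
\[
T^{(\alpha)} \;=\; \sum_{j=1}^d \alpha_j T_j \;=\; \int_{\mathbb{R}^d} \pscal{x}{\alpha}\, dE(x),
\]
so $\sigma(\alpha)=\sigma(T^{(\alpha)})$ equals the essential range of $x \mapsto \pscal{x}{\alpha}$ with respect to $E$, which is the closure of $\{\pscal{x}{\alpha} : x\in \op{JointSpec}(T_1,\dots,T_d)\}$. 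Taking suprema yields the first equality \eqref{equ:psi-JS}.

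Next, for the equality between the supremum over $\op{JointSpec}(T_1,\dots,T_d)$ and over its convex hull, I would use the linearity of $x\mapsto \pscal{x}{\alpha}$: any convex combination $x=\sum_i t_i x_i$ satisfies $\pscal{x}{\alpha} = \sum_i t_i \pscal{x_i}{\alpha} \leq \max_i \pscal{x_i}{\alpha}$, so passing to limits and to suprema gives
\[
\sup_{x \in \cv{B}} \pscal{x}{\alpha} \;\leq\; \sup_{x\in B}\pscal{x}{\alpha},
\]
for any closed set $B$, with the reverse inequality being trivial. This is exactly the content of Lemma \ref{lem:duality}(ii) applied to $B=\op{JointSpec}(T_1,\dots,T_d)$, so the second equality follows immediately.

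The step requiring the most care is the invocation of the joint spectral theorem when some of the $T_j$ are unbounded. Plain commutation on a common dense invariant domain is not sufficient in general; one needs strong commutativity, i.e.\ commutation of bounded functions of the $T_j$ (equivalently, of their spectral projections or resolvents). In the settings of this paper (Berezin\--Toeplitz operators, which are bounded, and essentially self\--adjoint $\hbar$\--pseudodifferential operators on suitable cores) this holds automatically, so the argument goes through without further ado. Once this is clarified, the two claimed equalities reduce to the spectral mapping/essential range observation and the elementary convexity fact above.
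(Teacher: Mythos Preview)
Your argument is correct and follows essentially the same route as the paper: both proofs invoke the joint spectral measure, write $T^{(\alpha)}=\int_{\R^d}\pscal{x}{\alpha}\,{\rm d}E(x)$, identify $\sigma(\alpha)$ with the closure of $\{\pscal{x}{\alpha}:x\in\op{JointSpec}(T_1,\dots,T_d)\}$ (the paper phrases this via the push-forward measure $\phi_*\mu$ and density of $\phi(\mathrm{supp}\,\mu)$ in its support, you via the essential range), and then appeal to Lemma~\ref{lem:duality}(ii) for the convex hull equality. Your added remark on strong commutativity is a useful clarification that the paper leaves implicit.
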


\begin{figure}[h]
  \centering
  \includegraphics[width=0.6\textwidth]{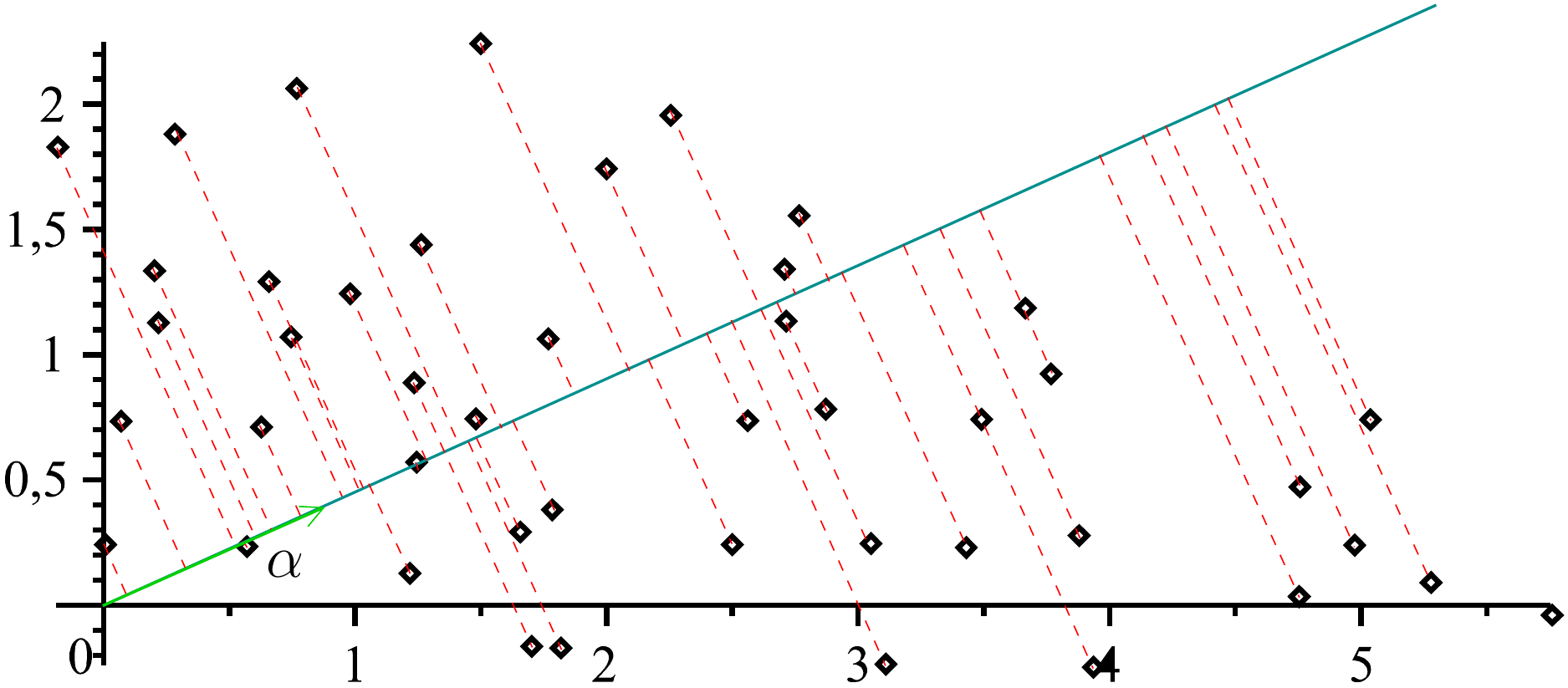}
  \caption{The projection of the joint spectrum onto the line directed
    by the vector $\alpha$ gives the spectrum of $\sum_{j=1}^d
    \alpha_j T_j$}
  \label{fig:2dlem}
\end{figure}

\begin{proof} Let $\mu_j$ be the spectral measure of $T_j$, and let
  $\mu=\mu_1 \otimes \dots \otimes \mu_d$ be the joint spectral
  measure on $\R^d$.  For a given $\alpha \in S^{d-1}$ define a linear
  functional $\phi: \RM^d \to \RM$ by $\phi(x) = \langle x,\alpha
  \rangle$. Observe that
$$T^{(\alpha)}= \int_{\RM^d} \phi \;{\rm d}\mu = \int_{\RM} t {\rm d}(\phi_*\mu)\;,$$
where $\phi_*\mu$ is the push-forward of $\mu$ to $\RM$. By definition
of the support of a spectral measure, $\phi(\text{supp}(\mu))$ is a
dense subset of $ \text{supp}(\phi_*\mu)$.  Thus $$\{ \langle x,\,
\alpha \rangle \, | \, x \in {\rm JointSpec}(T_1,\ldots, T_d)\} $$ is
a dense subset in $\sigma(\alpha)$. This proves the first equality in
\eqref{equ:psi-JS}.  The second equality follows from Lemma
\ref{lem:duality}(ii).
\end{proof}

Now we are ready to give a proof of Theorem~\ref{theo:general}
and Theorem~\ref{theo:bounded}.

\medskip
\paragraph{\emph{Proof of Theorem \ref{theo:general}}.}

We denote by $\Sigma$ the joint spectrum of $(T_1,\dots,T_d)$. Let
$\alpha\in S^{d-1}\subset \R^d$ and let $f_{\alpha}:=\langle \alpha,
\, F \rangle$. By Lemma~\ref{lem:max}, we have
\[
\lim_{\hbar \to 0} \lambda_{\max}({\rm T}_{f_\alpha}) = \sup f_\alpha,
\]
which, in view of~\eqref{equ:psi-JS}, reads
\begin{equation}
  \lim_{\hbar \to 0} \Phi_{\Sigma}(\alpha) = \Phi_{F(M)}(\alpha).
  \label{equ:pointwise}
\end{equation}
Therefore, the map $\Phi_{F(M)}$ can be recovered from the joint
spectrum $\Sigma$; by Lemma~\ref{lem:duality} we can then recover
$\cv{F(M)}$, which proves Theorem~\ref{theo:general}.

\medskip

\paragraph{\emph{Proof of Theorem~\ref{theo:bounded}.}}

If the principal symbols of $T_1,\dots,T_d$ are bounded then the joint spectrum $\Sigma$ is 
 bounded, and it follows from Proposition~\ref{prop:phi}(iii) that the
family of maps $$(\Phi_{\Sigma}- \Phi_{F(M)})_{\hbar \in (0,h_0]}$$ is
uniformly equicontinuous for $h_0>0$ small enough.

This uniform equicontinuity and the compactness of the sphere
$S^{d-1}$ imply that the pointwise limit~\eqref{equ:pointwise} is in
fact uniform~:
\[
\forall \epsilon >0 , \exists \hbar_0>0 , \forall \alpha \in S^{d-1} ,
\forall \hbar<\hbar_0,\,\,\,\, \abs{\Phi_{\Sigma}(\alpha) -
  \Phi_{F(M)}(\alpha)} \leq \epsilon.
\]

Now Proposition~\ref{prop:phi}(i),(ii) gives the inclusions
\begin{equation*} \label{A} \cv{{\rm JointSpec}(T_1,\ldots, T_d)}
  \subset \cv{F(M)} + \overline{{\rm B}(0,\epsilon)}
\end{equation*}
and
\begin{equation*} \label{B} \cv{F(M)} \subset \cv{{\rm
      JointSpec}(T_1,\ldots, T_d)} + \overline{{\rm B}(0, \epsilon}).
\end{equation*}

In other words, if $\hbar\leq \hbar_0$, the Hausdorff distance between
the joint spectrum of $(T_1,\ldots, T_d)$ and $F(M)$ is less than
$\epsilon$, which proves Theorem~\ref{theo:bounded}.

\section{Berezin\--Toeplitz quantization} \label{sec:toeplitz}

 \subsection{Preliminaries}\label{subsec-bt-prelim}  In this section we are concerned with Berezin\--Toeplitz operators
 (simply called in the sequel Toeplitz operators) and quantization of
 classical systems given by such operators, see Kostant \cite{Ko1970},
 Souriau \cite{So1970}, and Berezin \cite{Be1975}, as well as the book
 \cite{BoGu1981} by Boutet de Monvel and Guillemin for the
 corresponding microlocal analysis. Many well known results for
 pseudodifferential operators are now known for Toeplitz operators,
 see \cite{BMS,guillemin_star-product,BU,MM,Ch2003b}.

 Let $(M,\omega)$ be a closed symplectic manifold whose symplectic
 form represents an integral de Rham cohomology class of $M$ times
 $2\pi$. In what follows such symplectic manifolds will be called {\it
   prequantizable}. When a prequantizable symplectic manifold
 $(M,\omega)$ is K\"{a}hler with respect to a complex structure $J$,
 the Berezin-Toeplitz quantization can be given by the following
 geometric construction.  Choose a holomorphic Hermitian line bundle
 $\cL$ over $M$ so that the curvature of its (unique) Hermitian
 connection compatible with the holomorphic structure equals
 $-{\rm i}\omega$ (the existence of such a $\cL$ is a well known fact from
 complex algebraic geometry).  For a positive integer $m=1/\hbar$,
 \begin{equation}
   \mathcal{H}_{\hbar}:=\mathrm{H}^0(M,\mathcal{L}^m)
   \label{equ:hilbert}
 \end{equation}
 is the space of holomorphic sections of $\mathcal{L}^m$.

 Since $M$ is compact, $\mathcal{H}_{\hbar}$ is a finite dimensional
 subspace of the Hilbert space ${\rm L}^2(M, \mathcal{L}^m)$. Here the
 scalar product is defined by integrating the Hermitian pointwise
 scalar product of sections against the Liouville measure of $M$.
 Denote by $\Pi_{\hbar}$ the orthogonal projector of ${\rm L}^2(M,
 \mathcal{L}^m)$ onto ${\mathcal{H}}_{\hbar}$.

 Put $\cA_0= \textup{C}^\infty(M)$ and define the quantization map
 $\OP$ by $$\OP(f) = \Pi_{\hbar}S_f,$$ where $S_f$ is the operator of
 multiplication by $f$.  Here the Planck constant $\hbar$ runs over
 the set $I=\{{\textstyle\frac{1}{m}}\;|\; m\in\N\}$.  The fact that
 $\OP$ is a semiclassical quantization in the sense of
 Definition~\ref{BT} is proved in \cite{BMS}.

 \medskip

 In what follows we shall use that the Berezin-Toeplitz quantization
 behaves in a functorial way with respect to direct products of closed
 K\"{a}hler manifolds: Given prequantum bundles $\mathcal{L}_1$ and
 $\mathcal{L}_2$ over $M_1$ and $M_2$ respectively, form a prequantum
 bundle $\mathcal{L}= p_1^*\mathcal{L}_1 \otimes p_2^*\mathcal{L}_2$
 over $M:= M_1 \times M_2$, where $p_i:M \to M_i$ is the natural
 projection. By a version of K\"{u}nneth formula \cite{Kunneth}
 ${\rm H}^0(M,\mathcal{L}) = {\rm H}^0(M_1,\mathcal{L}_1)\otimes
 {\rm H}^0(M_2,\mathcal{L}_2)$. For a pair of functions $f^{(i)} \in
 \textup{C}^\infty(M_i)$, $i=1,2$ define a new function $f \in
 \textup{C}^\infty(M)$ by $f(x_1,x_2):= f^{(1)}(x_1)f^{(2)}(x_2)$ One
 can readily check that
 \begin{equation}\label{BT-product}
   \OP(f) = \OP(f_1) \otimes \OP(f_2)\;.
 \end{equation}

 \medskip

 If $M$ is a prequantizable closed symplectic (but not necessarily
 K\"{a}hler) manifold, there exist several constructions of
 semiclassical quantizations of $M$ satisfying Definition \ref{BT},
 see \cite{BoGu1981, guillemin_star-product, BU, ShZe2002, MM}.

 Now we are ready to present some specific applications of our main
 results in the context of the Berezin-Toeplitz quantization.

 \subsection{The case of Hamiltonian torus actions} \label{sec:toric}

 Assume that $M$ is a prequantizable K\"ahler manifold endowed with a
 Hamiltonian $\T^d$-action which preserves the complex structure. Then
 the Kostant-Souriau formula yields commuting Toeplitz operators whose
 principal symbols are the components of the $\T^d$-momentum map. A
 proof when $d=n$ was outlined in Charles\--Pelayo\--V\~{u} Ng\d{o}c
 \cite[Theorem~1.4]{ChPeVN2011} as a byproduct of the main result of
 the paper (a normal form theorem for quantum toric systems); the
 $d\leq n$ case was stated in Charles~\cite[section 3]{Ch2006a}
 without further details. Therefore such a quantum $\T^d$ action, $d
 \leq n$, satisfies the hypothesis of our main theorem.

 \vskip 1em

 Not all symplectic manifolds have a complex structure or a prequantum
 bundle. However a symplectic toric manifold, that is when $2d=2n$ is
 the dimension of $M$, always admits a compatible complex structure,
 which is not unique. Furthermore a symplectic toric manifold $M$ with
 momentum map $\mu:M\to\mathbb{R}^n$ is prequantizable if and only if
 there exists $c \in \mathbb{R}^n$ such that the vertices of the
 polytope $\mu(M)+c$ belong to $2 \pi \mathbb{Z}^n$. If it is the
 case, the prequantum bundle is unique up to isomorphism.
 Figure~\ref{fig:convex_hull} shows the joint spectrum of a
 4-dimensional toric manifold (a Hirzebruch surface).

 By the Atiyah and Guillemin-Sternberg theorem \cite{At1982, GuSt1982}
 , for any Hamiltonian torus action on a connected closed manifold,
 the image of the momentum map is a rational convex
 polytope~\cite{At1982,GuSt1982}.  So the map
 $\mu=(\mu_1,\dots,\mu_n):M\to\mathbb{R}^n$ satisfies the assumption
 of Corollary~\ref{cor:im}. Even more, for a symplectic toric
 manifold, the momentum polytope $\Delta \subset \R^n $ has the
 additional property that for each vertex $v$ of $\Delta$, the
 primitive normal vectors to the facets meeting at $v$ form a basis of
 the integral lattice $\Z^n$. We call such a polytope a {\em Delzant
   polytope}.

\begin{definition}
  Two symplectic toric manifolds $(M, \om ,\mu)$ and $( M', \om' ,
  \mu')$ are \emph{isomorphic} if there exists a symplectomorphism
  $\varphi: M \rightarrow M'$ such that $\varphi^* \mu' = \mu.$
\end{definition}

By the Delzant classification theorem \cite{d}, a symplectic toric
manifold is determined up to isomorphism by its momentum
polytope. Furthemore, for any Delzant polytope $\Delta$, Delzant
constructed in \cite{d} a symplectic toric manifold $(M_{\Delta} ,
\om_{\Delta} , \mu_{\Delta} )$ with momentum polytope $\Delta$.

\begin{cor}[Isospectrality for toric systems,
  \cite{ChPeVN2011}] \label{theo:inverse-spectral} Let $T_1,\dots,
  T_n$ be commuting self\--adjoint Toeplitz operators on a symplectic
  toric manifold $(M, \, \omega, \, \mu : M \rightarrow \R^n)$ whose
  principal symbols are the components of $\mu$. Then
$$\De : = \lim_{\hbar \rightarrow 0}  \textup{JointSpec}(T_1,\dots, T_n)$$
is the Delzant polytope $\mu(M)$. Moreover, $(M, \om, \mu)$ is
isomorphic with $(M_{\Delta}, \om_{\Delta}, \mu_{\Delta})$.
\end{cor}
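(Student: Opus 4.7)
The plan is to combine Theorem~\ref{theo:toeplitz} with two classical pillars of the theory of symplectic toric manifolds: the Atiyah--Guillemin--Sternberg convexity theorem and Delzant's classification theorem.

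First I identify the classical spectrum. Since the principal symbols of the $T_j$ are by hypothesis the components of the momentum map $\mu$, the classical spectrum is $\mathcal{S} = \mu(M) \subset \R^n$. By the Atiyah--Guillemin--Sternberg convexity theorem, $\mu(M)$ is already a convex polytope, namely the Delzant polytope $\De$ associated with $(M,\om,\mu)$, so in particular $\cv{\mu(M)} = \De$.

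Next I apply Theorem~\ref{theo:toeplitz} to the commuting selfadjoint Toeplitz family $(T_1,\dots,T_n)$, obtaining
\[
\lim_{\hbar \to 0} \cv{\textup{JointSpec}(T_1,\dots,T_n)} \;=\; \cv{\mathcal{S}} \;=\; \De
\]
in the Hausdorff metric. Since any set is contained in its own convex hull, one half of the Hausdorff convergence of the joint spectra themselves follows immediately: eventually $\textup{JointSpec}(T_1,\dots,T_n) \subset \De + \overline{{\rm B}(0,\epsilon)}$. The reverse inclusion---asymptotic density of the joint eigenvalues inside $\De$---is not a formal consequence of Theorem~\ref{theo:toeplitz} and requires the finer Bohr--Sommerfeld-type description of the joint spectrum of a quantum toric system established in \cite{ChPeVN2011}, where it is shown that, modulo higher-order corrections, the joint spectrum is an affine lattice of mesh $\hbar$ intersected with $\De$.

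Finally, once $\De$ has been recovered from the quantum data, Delzant's classification theorem yields the isomorphism: a Delzant polytope determines a symplectic toric manifold uniquely up to equivariant symplectomorphism, so $(M,\om,\mu) \cong (M_\De,\om_\De,\mu_\De)$. The main obstacle in the whole argument is the density half of the Hausdorff convergence of the joint spectra to $\De$, which genuinely lies beyond the reach of Theorem~\ref{theo:toeplitz} alone. Fortunately, since $\De$ is already determined by the limit of the convex hulls, the classification conclusion of the corollary depends only on the convex-hull statement and is therefore immediate from Theorem~\ref{theo:toeplitz} combined with Delzant's theorem.
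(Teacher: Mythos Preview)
Your argument is correct and follows exactly the route the paper takes: Atiyah--Guillemin--Sternberg convexity identifies the classical spectrum $\mu(M)$ as the Delzant polytope, Theorem~\ref{theo:toeplitz} (equivalently Corollary~\ref{cor:im}) then recovers $\Delta$ from the joint spectrum via convex hulls, and Delzant's classification yields the isomorphism. Your discussion of the ``density half'' is in fact more careful than the paper's: the paper states the corollary with $\lim_{\hbar\to 0}\textup{JointSpec}$ rather than $\lim_{\hbar\to 0}\cv{\textup{JointSpec}}$, but its own methods only deliver the latter, as it tacitly acknowledges by remarking that its approach is ``less informative'' than \cite{ChPeVN2011}; you are right that the isospectrality conclusion needs only the convex-hull limit.
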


The approach of the present paper bypasses the precise description of
the semiclassical spectral theory, so it is less informative than the
one of \cite{ChPeVN2011}. However, it has the advantage of concluding
isospectrality with an easier proof, which moreover applies in a much
more general setting.

\subsection{Coupled angular momenta} \label{sec-examples-1}

Here we present an example of a non-toric integrable system modeling a
pair of coupled angular momenta as a prequantum bundle of $(S^2, \frac{1}{2}\sigma)$.  It has been described first by
D.A. Sadovski{\'\i} and B.I. Zhilinski{\'\i} in \cite{SaZh1999} (see
also \cite[Example 6.2]{VN2007} and \cite{PeVN2009, PeVN2011} for
further discussion).

In order to present this system we need some preliminaries. Consider
the unit sphere $S^2 \subset \R^3$ equipped with the standard area
form $\sigma$ of total area $4\pi$. Let $(x,y,z)$ be the Euclidean
coordinates on $\R^3$ considered as functions on $S^2$.  The Poisson
brackets of these functions satisfy the relation
\begin{equation}\label{eq-comrel-class}
  \{x,y\} = z
\end{equation}
and its cyclic permutations.

Identify $S^2$ with the complex projective line $\mathbb{C} P^1 =
\mathbb{C} \cup \{\infty\}$ by the map
$$W \in  \mathbb{C} \cup \{\infty\} \mapsto \Big{(} \frac{2\text{Re}\;W}{1+|W|^2}, \frac{2\text{Im}\;W}{1+|W|^2}, \frac{1-|W|^2}{1+|W|^2}\Big{)} \in S^2\;.$$ Let $\mathcal{L}$ be the holomorphic line bundle over $\mathbb{C} P^1$
dual to the tautological one. We fix the scheme $T$ of the
Berezin-Toeplitz quantization associated to $\mathcal{L}$ considered
as a prequantum bundle of $(S^2, \frac{1}{2}\sigma)$. By a straightforward but cumbersome
calculation with Berezin's coherent states \cite{Be1975} one can
verify the quantum commutation relation
\begin{equation}\label{eq-comrel-quant} [T_m(x),T_m(y)] =
  -\frac{2i}{m+2} T_m(z)
\end{equation}
and its cyclic permutations.

For a positive number $a$, the sphere $(S^2,a\sigma)$ serves as the
phase space of classical angular momentum whose components are given
by $(ax, ay, az)$. The number $a$ plays the role of the amplitude of
the angular momentum. In view of \eqref{eq-comrel-class} we have the
relation
\begin{equation}\label{eq-comrel-class-1}
  \{x,y\}_a = a^{-1}z
\end{equation}
and its cyclic permutations, where $\{.,.\}_a$ stands for the Poisson
bracket associated to $a\sigma$.  If $a$ is a positive half-integer,
that is $a \in \mathbb{N}/2$, the sphere $(S^2,a\sigma)$ is
quantizable with the prequantum bundle $\mathcal{L}^{2a}$. The
corresponding Berezin-Toeplitz quantization $T^{(a)}$ is given by
\begin{equation}\label{eq-BT-a}
  T_m^{(a)}(f) = T_{2am}(f)\;.
\end{equation}

Fix now $a_1,a_2 >0$. The phase space of the system of coupled angular
momenta is the manifold $M= S^2 \times S^2$ equipped with the
symplectic form $\omega = a_1\sigma_1 \oplus a_2\sigma_2$, while the
coupling Hamiltonian is independent on $a_1,a_2$ and is given by $$H=
x_1x_2+y_1y_2 +z_1z_2.$$ Here and below we equip all the data
corresponding to the first and the second factor of $M$ by lower
indices $1$ and $2$ (e.g. $x_2$ is the $x$-coordinate on the second
factor, etc.) We write $\{.,.\}_M$ for the Poisson bracket on $M$.
The coupling Hamiltonian $H$ admits a first integral $F= a_1z_1 +
a_2z_2$. Indeed, by using \eqref{eq-comrel-class-1} one readily checks
that $\{H,F\}_M =0$.

In order to quantize this system, introduce the prequantum bundle $L =
p_1^*\mathcal{L}_1^{2a_1} \otimes p_2^*\mathcal{L}_2^{2a_2}$ over $M$,
where $\mathcal{L}_j$ is a copy of $\mathcal{L}$ over the $j$-th
factor of $M$, and $p_j$ is the projection of $M$ to the $j$-th
factor, $j=1,2$.  Denote by $\widehat{T}_m$ the corresponding
Berezin-Toeplitz quantization.

For $j=1,2$ put $\gamma_{j,m}:= 1+ a_j^{-1}m^{-1}$, and set
\begin{eqnarray}
X_j&:=& \gamma_{j,m}\widehat{T}_m(x_j),  \nonumber \\
Y_j&:=&\gamma_{j,m}\widehat{T}_m(y_j), \nonumber \\
Z_j&:=& \gamma_{j,m}\widehat{T}_m(z_j)\;. \nonumber 
\end{eqnarray}
By \eqref{BT-product} and \eqref{eq-BT-a}, the operators
$$\widehat{H}_m:= X_1 \otimes X_2 + Y_1 \otimes Y_2 + Z_1 \otimes Z_2$$
and
$$\widehat{F}_m:= a_1 Z_1 \otimes \Id +  \Id \otimes a_2Z_2$$
are Toeplitz operators with the principal symbols $H$ and $F$
respectively.  The commutation relations \eqref{eq-comrel-quant}
readily yield that $[\widehat{H},\widehat{F}]=0$.  Let us emphasize
that for a given integrable system $F_1,\dots,F_n$, the existence of
pair-wise commuting semiclassical operators with principal symbols
$F_1,\dots,F_n$ is not at all automatic, see \cite{ChPeVN2011} for a
discussion and references.

\medskip

Next, let us describe the classical spectrum of the system, that is
the image of the momentum map
$$\Phi: S^2 \times S^2 \to \RM^2,\; (v,w) \mapsto (F(v,w), H(v,w))\;.$$
Without loss of generality assume that $a_1 =1$ and $a_2=a \geq 1$
(otherwise make a rescaling, maybe after the permutation of the
variables). It is not hard to see that the image of the map
$$\Psi: S^2 \times S^2 \to \RM^3,\; (v,w) \mapsto v+aw$$ is the spherical shell
$$\{u \in \R^3\;:\; a-1 \leq |u| \leq a+1\}\;.$$
Observe that
$$H= \frac{1}{2a} \cdot (|\Psi|^2 -a^2-1)\;,$$
and $F$ is simply the $z$-coordinate of $\Psi$. Therefore on each
sphere of radius $$r: = |\Psi|= \sqrt{1+a^2+2aH} \in [a-1,a+1]\;,$$ in
$\RM^3$ centered at the origin, the value of $F$ runs from $-r$ to
$r$. Furthermore, $H$ is an increasing function of $r$ which takes
values $\pm 1$ at $r= a \pm 1$. We conclude that the image of
$\Phi=(F,H)$ is the domain $\Delta \subset \RM ^2$ given by the
inequalities
$$F^2 \leq 1+a^2+2aH,\;\; -1 \leq H \leq 1\;.$$
This domain is clearly convex.

We conclude by Theorem \ref{theo:toeplitz} that {\it the convex hull
  of the joint spectrum of the Toeplitz operators $\widehat{F}_m$ and
  $\widehat{H}_m $ converges to $\Delta$ in the Hausdorff sense as $m
  \to \infty$.}

\section{$\hbar$-Pseudodifferential quantization}
\label{sec:pseudodifferential}

\subsection{Preliminaries}\label{subsec-pdo-prelim}  

If $M=\R^{2n}$ or $M$ is the cotangent bundle of a closed manifold,
then a well-known semiclassical quantization of $M$ is given by
$\hbar$-pseudodifferential operators, which is a semiclassical variant
of the standard homogeneous pseudodifferential operators (see for
instance~\cite{dimassi-sjostrand}). In this setting, commuting
semiclassical pseudodifferential operators have been considered by
Charbonnel~\cite{charbonnel}; see also~\cite{san-cpam}. In the
remaining of this text, we omit the $\hbar-$ prefix for notational
simplicity.

Symbolic calculus of pseudodifferential operators is known to hold
when the symbols belong to a Hörmander class. For instance one can
take
\[
\mathcal{A}_0 := \{ f \in \textup{C}^\infty(\R^{2n}_{(x,\xi)}) \,\,
:\, \, \exists m\in\R\quad \abs{\partial_{(x,\xi)}^\alpha f}\leq
C_\alpha\langle (x,\xi)\rangle^m \quad \forall \alpha\in\N^{2n}\}.
\]
Here $\langle z \rangle:= (1+|z|^2)^{1/2}$ for $z \in \R^q$.  If
$f\in\mathcal{A}_0$, its Weyl quantization is defined on
$\mathcal{S}(\R^n)$ by
\begin{equation}
  \label{equ:weyl}
  (\OP(f) u)(x) := \frac{1}{(2\pi\hbar)^{n}}\int_{\R^{2n}}
  \textup{e}^{\frac{\textup{i}}{\hbar}((x-y)\cdot\xi)} f({\textstyle\frac{x+y}{2}},\xi)u(y)\textup{d}y \textup{d}\xi.
\end{equation}

Let $X$ be a closed $n$-dimensional manifold equipped with a smooth
density $\mu$.  We cover it by a finite set of coordinate charts
$U_1,\dots,U_N$ each of which is identified with a convex bounded
domain of $\R^n$ equipped with the Lebesgue measure (the existence of
such an atlas readily follows from Moser's argument
\cite{Moser-Dacarogna}). Let $\chi_1^2,\dots,\chi_N^2$ be a partition
of unity subordinated to $U_1,\dots,U_N$, that is $\text{supp}(\chi_j)
\subset U_j$ and $$\sum_j \chi_j^2 =1.$$ Then, for any
$f\in\textup{C}^\infty(\textup{T}^*X)$ such that
\[
\abs{\partial_\xi^\alpha f(x)} \leq C_\alpha \langle \xi \rangle ^m,
\qquad \forall (x,\xi)\in \textup{T}^*X, \;\forall \alpha\in \N^n,
\]
for some $m\in\R$, we define, for $u\in\textup{C}^\infty(X)$,
\begin{equation} \label{eq-weyl-manifold} \OP(f) u:= \sum_{j=1}^N
  \chi_j \cdot \OP^j(f) (\chi_j u)\;,
\end{equation}
where $\OP^j(f)$ is the Weyl quantization calculated in $U_j$.  The
operator $u\mapsto \chi_j \cdot \OP^j(f)(\chi_j u)$ is a
pseudodifferential operator on $X$ with principal symbol
$f(x,\xi)\chi_j^2(x)$ for $(x,\xi)\in \textup{T}^*_x X$. Therefore
$\OP(f)$ is a pseudodifferential operator on $X$ with principal symbol
$\sum f\chi_j^2 = f$. The standard pseudodifferential symbolic
calculus \cite{dimassi-sjostrand,Z} gives the following proposition.
\begin{prop}
  $\hbar$-pseudodifferential quantization on $X$, where $X$ is either
  $\R^n$ or a closed manifold equipped with a density, is a
  semiclassical quantization of $\textup{T}^*X$ in the sense of
  Definition~\ref{BT}, where $\mathcal{A}_0$ is a Hörmander symbol
  class, $I=(0,1]$, the Hilbert space $\mathcal{H}_\hbar$ is
  $\textup{L}^2(X)$ (it is independent of $\hbar$). If $X=\R^n$,
  $\OP(f)$ is given by the Weyl quantization. If $X$ is a closed
  manifold, $\OP(f)$ is constructed via formula
  \eqref{eq-weyl-manifold}.
\end{prop}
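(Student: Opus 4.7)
The plan is to verify the four axioms (Q\ref{item:one})--(Q\ref{item:symbolic}) of Definition~\ref{BT} directly from the standard symbolic calculus of $\hbar$-pseudodifferential operators. I would first dispose of the case $X=\R^n$ using the explicit Weyl formula \eqref{equ:weyl}, and then reduce the closed manifold case to it by using formula \eqref{eq-weyl-manifold} together with the fact that each summand $\chi_j \cdot \OP^j(f) \cdot \chi_j$ is a pseudodifferential operator on $X$ with principal symbol $f\chi_j^2$; summing over $j$ gives back the principal symbol $f$.

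For axiom (Q\ref{item:one}), the $\xi$-integral in \eqref{equ:weyl} with symbol $f\equiv 1$ produces, after Fourier inversion, the identity operator; on a closed manifold, \eqref{eq-weyl-manifold} gives $\OP(1) = \sum_j \chi_j^2 = \Id$. Thus normalization holds exactly. For axiom (Q\ref{item:garding}), I would invoke the sharp Gårding inequality, which is a standard tool in semiclassical Weyl calculus: if $f \geq 0$ lies in the Hörmander class then $\pscal{\OP(f) u}{u} \geq -C_f \hbar \norm{u}^2$; on a closed manifold one applies it in each chart to $\chi_j(x)f(x,\xi)\chi_j(x)$, which is non-negative whenever $f$ is. For axiom (Q\ref{item:symbolic}), the composition formula asserts that $\OP(f)\circ \OP(g) = \OP(f \# g)$, where the Moyal product satisfies $f \# g = fg + \hbar r_\hbar$ with $r_\hbar$ in a suitable class supported where $g$ is (since $g$ is compactly supported, all derivatives of $g$ are as well, hence $r_\hbar$ has compact support in $(x,\xi)$); the $\cO(\hbar)$ bound on the operator norm of $\OP(hr_\hbar)$ then follows from the Calderón--Vaillancourt theorem, which also gives that $\OP(fg)$ is bounded since $fg$ is compactly supported.

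For axiom (Q\ref{item:norm}), I would use semiclassical coherent states. Pick a point $(x_0,\xi_0)$ in the interior of $\text{supp}(f)$ at which $|f(x_0,\xi_0)|$ is close to $\norm{f}_\infty$, and consider the Gaussian wave packet
\[
u_{\hbar}(x) := (\pi \hbar)^{-n/4}\, \textup{e}^{\ii (x-x_0)\cdot \xi_0/\hbar}\, \textup{e}^{-|x-x_0|^2/(2\hbar)},
\]
normalized in $\textup{L}^2(\R^n)$. A standard stationary-phase computation (cf.~\cite{dimassi-sjostrand}) yields $\OP(f)u_\hbar = f(x_0,\xi_0) u_\hbar + \cO(\sqrt{\hbar})$ in $\textup{L}^2$, so $\norm{\OP(f)} \geq |f(x_0,\xi_0)| - \cO(\sqrt{\hbar})$. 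On a closed manifold one performs this construction in a local chart containing a point of $\text{supp}(f)$, multiplied by a cut-off $\chi_j$ equal to $1$ near $x_0$, and uses \eqref{eq-weyl-manifold} to convert back.

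The routine obstacle is bookkeeping in axiom (Q\ref{item:symbolic}) when $f$ is allowed to be unbounded in the Hörmander class: one must verify that the off-diagonal cross terms arising when expanding $\OP(f)\circ\OP(g) = \sum_{j,k} \chi_j \OP^j(f)\chi_j \chi_k \OP^k(g)\chi_k$ produce remainders of order $\hbar$. This is handled by the fact that $\chi_j \chi_k$ is smooth and compactly supported, so each cross-term pseudodifferential composition lies in a class with compactly supported symbol and the Moyal remainder is again $\cO(\hbar)$ in operator norm by Calderón--Vaillancourt. The other steps are essentially citations to the semiclassical calculus developed, e.g., in \cite{dimassi-sjostrand}.
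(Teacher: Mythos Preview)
Your proposal is correct and follows essentially the same approach as the paper: verify (Q1)--(Q4) directly from the standard semiclassical Weyl calculus. For (Q1), (Q2), and (Q4) your argument matches the paper's almost line for line (Fourier inversion plus $\sum_j \chi_j^2 = 1$ for normalization, the G{\aa}rding inequality for quasi-positivity, and the Moyal expansion together with Calder\'on--Vaillancourt for the product formula). The one genuine difference is in the verification of (Q3): you test the operator on a Gaussian coherent state centered at $(x_0,\xi_0)$ and obtain $\OP(f)u_\hbar = f(x_0,\xi_0)u_\hbar + \cO(\sqrt{\hbar})$ in $\textup{L}^2$, whereas the paper uses a WKB state $u_\hbar(x)=\textup{e}^{\ii\langle x,\xi_0\rangle/\hbar}\chi(x)$ and a stationary-phase computation to get a \emph{pointwise} lower bound $|Pu_\hbar(x)|=|p(x,\xi_0)|+\cO(\hbar)$, which it then integrates over a small neighborhood. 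Both arguments are standard microlocal devices and either suffices; your coherent-state route is slightly cleaner (it gives the $\textup{L}^2$ bound directly) at the cost of a weaker $\cO(\sqrt{\hbar})$ remainder, which is irrelevant for (Q3). One minor imprecision: the Moyal remainder $r_\hbar$ in $f\#g=fg+\hbar r_\hbar$ is not literally supported in $\op{supp}(g)$ but only rapidly decaying away from it; this does not affect the argument since Calder\'on--Vaillancourt only needs uniform symbol bounds.
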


\begin{proof}
  For the reader's convenience, let us outline the proof of the
  proposition. Let $S_j: \textup{C}^\infty (U_j) \to
  \textup{C}^\infty_0 (U_j)$ be the operator of multiplication by
  $\chi_j$. Then \eqref{eq-weyl-manifold} reads $\OP(f) =\sum_j S_j
  \OP^j(f) S_j$, and so $\OP(f)$ is self-adjoint since $S_j$ and the
  Weyl quantization are self-adjoint.  Axiom (Q\ref{item:one}) follows
  from the fact that $S_j\OP^j(1)S_j=S_j^2$, and $\sum_j S_j^2 = {\rm
    Id}$. Axiom (Q\ref{item:garding}) is known as the G{\aa}rding
  inequality \cite[Theorem 7.12]{dimassi-sjostrand}.  Axiom
  (Q\ref{item:symbolic}) is a consequence of the product formula for
  the Weyl quantization (see e.g. \cite[Theorem
  7.9]{dimassi-sjostrand}).

  For property (Q\ref{item:norm}) see \cite[Theorem 13.13]{Z}.
  Alternatively, it is not hard to derive it from a standard result on
  spectral asymptotics for $\hbar$-pseudo-differential operators (see
  e.g. \cite[Corollary 9.7]{dimassi-sjostrand}).  For reader's
  convenience, we present a direct short argument.

  Let $P$ be a pseudo-differential operator with a compactly supported
  principal symbol $p \neq 0$.  Take $(x_0,\xi_0)\in {\rm T}^*X$ such
  that $p(x_0,\xi_0)\neq 0$.  Let $\chi$ be a smooth function on $X$
  supported in a small neigbourhood of $x_0$, contained in a
  coordinate chart $U$ for $X$, and such that
  $\norm{\chi}_{L^2(X)}=1$. In these coordinates, we define the WKB
  function on $U$
  \begin{equation}
    u_\hbar(x) := {\rm e}^{\frac{{\rm i}}{\hbar}\pscal{x}{\xi_0}}\chi(x).\label{equ:wkb}
  \end{equation}
  Then we may compute directly
  \[
  (\chi P u_\hbar) (x) = \frac{1}{(2\pi\hbar)^n}\int_{\RM^{2n}}{\rm
    e}^{\frac{{\rm
        i}}{\hbar}\pscal{x-y}{\xi}}a_\hbar({\textstyle\frac{x+y}{2}},\xi)
  {\rm e}^{\frac{{\rm i}}{\h}\pscal{y}{\xi_0}}\chi(y)\chi(x){\rm
    d}y{\rm d}\xi.
  \]
  The phase $$\phy(x,y,\xi)=\pscal{x-y}{\xi} + \pscal{y}{\xi_0}$$ is
  stationary when $\partial_y\phy = \xi_0-\xi=0$ and $\partial_\xi\phy
  = x-y=0 $. The hessian $${\rm d}^2\phy=
  \begin{pmatrix}
    0 & -{\rm Id}\\ -{\rm Id} & 0
  \end{pmatrix}
  $$ is non-degenerate.  Thus, for fixed $x$ in a neighborhood $V$ of
  $x_0$ the stationary phase approximation gives
  \[
  \abs{(\chi P u_\hbar) (x)} = \abs{a_\h(x,\xi_0)\chi(x)} +
  \mathcal{O}(\h) = \abs{p(x,\xi_0)\chi(x)} + \mathcal{O}(\h).
  \]
  Hence
  \begin{equation}
    \label{equ:leading}
    \abs{P u_\hbar (x)} = \abs{p(x,\xi_0)}+ \mathcal{O}(\h).
  \end{equation}
  Furthermore, the remainder in \eqref{equ:leading} is of order $
  \mathcal{O}(\h)$ uniformly in $x \in V$. Therefore, shrinking the
  neighborhood $V$ if necessary, we get that
  \[
  \forall x\in V,\qquad \abs{P u_\hbar (x)}> \abs{p(x_0,\xi_0)}/2
  \] for $\h$ small enough. This gives the desired result, since
  \[
  \norm{P} \geq \norm{P u_\h}_{{\rm L}^2(X)} \geq
  \abs{p(x_0,\xi_0)}\sqrt{\text{Volume}{V}}/2.
  \]
  This completes the proof.
\end{proof}

\begin{remark}\label{remark-hbar-dependence} {\rm  In Theorem \ref{theo:pseudodifferential}(ii) and
Corollary \ref{cor:im0} the assumption on mild dependence of the symbol
 on $\hbar$ cannot be  dropped. Indeed, consider a single $\hbar$-pseudodifferential 
operator $\OP(\hbar x)= \hbar x$ on $\RM$. Here the dependence of  $a(x,\xi,\hbar)=\hbar x$ on $\hbar$
is not mild. The principal symbol vanishes so the classical spectrum equals $\{0\}$, while
for each $\hbar >0$ the spectrum of $\OP(\hbar x)$ equals $\RM$. We conclude that the classical spectrum
cannot be recovered from the quantum one in the classical limit. Let us mention that
 the formulation and the proof of Theorem \ref{theo:general} remain valid if we allow a slightly more general
 class of symbols 
$$a(x,\xi,\hbar)=a_0(x,\xi) + \hbar a_{1,\hbar}(x,\xi)\;,$$
where all $a_{1,\hbar}(x,\xi)$ are uniformly (in $\hbar$) bounded but
not necessarily compactly supported.  The significance of such a
generalization is questionable since such symbols do not form an
algebra. It contains however two subspaces that are algebras: $\cA_I$
as defined in the present paper, and $\cA'_I$ which is defined by
requesting $a_0$ to be bounded and $a_{1,\hbar}$ to be uniformly
bounded.}
\end{remark}

\subsection{Example: particle in a rotationally symmetric
  potential}\label{subsec-rot}

Consider a particle on the plane $\RM^2(x_1,x_2)$ with potential
energy $V(x_1^2 + x_2^2)$, where $V$ is a smooth function on
$[0,+\infty)$ such that $V(0)=0$, $V >0,V'>0$ and $V''\geq 0$ on
  $(0,+\infty)$.

The Hamiltonian of the particle is $$H(x,\xi)=
\frac{\xi_1^2+\xi_2^2}{2} + V(x_1^2 + x_2^2).$$ The corresponding
Hamiltonian system admits a first integral, the angular momentum
$F(x,\xi) = x_1 \xi_2 -x_2\xi_1\;$.  Fix a H\"{o}rmander class $\cA_0$
and assume that it contains $H$. A standard calculation
with the Weyl quantization shows that
\begin{eqnarray}
  \OP(H) &=& -\frac{1}{2}\hbar^2 \cdot \Big{(}\frac{\partial^2}{\partial x_1^2} + \frac{\partial^2}{\partial x_2^2}\Big{)} + V(x_1^2+x_2^2)\;; \nonumber \\
  \OP(F)&=& -{\rm i}\hbar \Big{(}x_1 \cdot \frac{\partial}{\partial x_2}- x_2 \cdot \frac{\partial}{\partial x_1}\Big{)}\;.
  \nonumber
\end{eqnarray}
The operators $\OP(H),\OP(G)$ commute.

\begin{lemma}\label{lem-Leg} 
   Let
  \begin{equation}\label{eq-Leg-g}
    g(z):= \max_{r\geq 0} (rz- rV(r))
  \end{equation}
  be the Legendre transform of $rV(r)$. Then $g(0)=0$, $g>0$ on
  $(0,+\infty)$, and $\sqrt{g}$ is concave on $(0, +\infty)$.
\end{lemma}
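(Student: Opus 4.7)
Proof plan. The first two claims are immediate from the definitions. For $z=0$ we note that $rV(r)\geq 0$ for all $r\geq 0$, with equality at $r=0$, so the maximum in \eqref{eq-Leg-g} is $0$ and is attained. For $z>0$, plugging in $r>0$ small, the quantity $rz-rV(r)=r(z-V(r))$ is strictly positive by continuity of $V$ at $0$ and $V(0)=0$, so $g(z)>0$.

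The substance is the concavity of $\sqrt{g}$ on $(0,+\infty)$. First I would set $h(r):=rV(r)$ and observe that $h'(r)=V(r)+rV'(r)$ and $h''(r)=2V'(r)+rV''(r)>0$ on $(0,+\infty)$, so $h$ is strictly convex, $h'(0)=0$, and $h'$ is a smooth strictly increasing diffeomorphism from $(0,+\infty)$ onto $(0,\sup h')$. Hence for each $z>0$ in this range the maximum in \eqref{eq-Leg-g} is attained at the unique $r=r(z)$ solving $V(r)+rV'(r)=z$, and
\[
g(z)=r(z)z-r(z)V(r(z))=r(z)^2\,V'(r(z)).
\]
(For $z$ above $\sup h'$ the supremum is $+\infty$; one checks from the hypotheses that either way the argument below goes through on the relevant interval where $g$ is finite.)

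Next I would compute the derivative of $\sqrt{g}$ using the chain rule, viewing $\sqrt{g(z)}=\phi(r(z))$ with $\phi(r):=r\sqrt{V'(r)}$ and $r(z)=\psi^{-1}(z)$, $\psi(r):=V(r)+rV'(r)$. A direct calculation gives
\[
\phi'(r)=\frac{2V'(r)+rV''(r)}{2\sqrt{V'(r)}},\qquad \psi'(r)=2V'(r)+rV''(r),
\]
so the two common factors cancel and
\[
\frac{d}{dz}\sqrt{g(z)}=\frac{\phi'(r(z))}{\psi'(r(z))}=\frac{1}{2\sqrt{V'(r(z))}}.
\]

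Finally I would conclude: since $\psi'>0$ the inverse function $r(z)$ is smooth and strictly increasing in $z$, and since $V''\geq 0$ the function $V'$ is nondecreasing, so $V'(r(z))$ is nondecreasing in $z$ and the expression $\tfrac{1}{2\sqrt{V'(r(z))}}$ is nonincreasing. Thus $(\sqrt{g})'$ is nonincreasing on $(0,+\infty)$, which means $\sqrt{g}$ is concave. The only delicate point is justifying the formula $g(z)=r(z)^2V'(r(z))$ on the whole of $(0,+\infty)$ when $h'$ is not surjective onto $(0,+\infty)$; in that case one notes that $g$ is still finite and differentiable on the image of $h'$, and the concavity of $\sqrt{g}$ extends by taking limits (or by checking that the complement contributes only an affine tail where $g\equiv +\infty$, which trivially satisfies the conclusion on its domain of finiteness).
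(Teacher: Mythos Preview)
Your proof is correct and essentially the same as the paper's: both identify the maximizer $r(z)$, compute $g(z)=r(z)^2V'(r(z))$, and deduce concavity of $\sqrt{g}$ from $V''\geq 0$ --- you via the explicit formula $(\sqrt{g})'(z)=\tfrac{1}{2\sqrt{V'(r(z))}}$, the paper via the equivalent second-derivative check $(g')^2\geq 2gg''$ using $g'(z)=r(z)$. Your surjectivity worry is unnecessary under the stated hypotheses, since $V'$ is nondecreasing and positive on $(0,\infty)$, so $h'(r)=V(r)+rV'(r)\geq rV'(1)\to\infty$ and the maximum in \eqref{eq-Leg-g} is attained for every $z>0$.
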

\begin{proof}
  It is immediate that $g(0)=0$ and $g>0$ on $(0,+\infty)$. Now assume
  $z>0$. The maximum of $rz-rV(r)$ is attained at a unique point, say
  $r(z)$. By a property of the Legendre transform,
  \begin{equation}\label{eq-Leg-1}
    g'(z) = r(z)\;,
  \end{equation}
  and hence
  \begin{equation}\label{eq-Leg-2}
    g''(z) = r'(z)\;.
  \end{equation}
  Observe also that $r'>0$: indeed, $g$ is strictly convex since $rV$
  is strictly convex on $(0,+\infty)$.  Further,
  \begin{equation}\label{eq-Leg-vsp}
    z= (rV)'(r(z)) = r(z)V'(r(z)) + V(r(z))\;.
  \end{equation}
  Differentiating by $z$ we get that
  \begin{equation}\label{eq-Leg-3}
    1=  2r'(z)V'(r(z))+r(z) r'(z) V''(r(z)) \geq 2r'(z)V'(z)\;,
  \end{equation}
  where the last inequality follows from $V''\geq 0$.  By
  \eqref{eq-Leg-vsp},
  \begin{equation}\label{eq-Leg-vsp-1}
    g(z) = r(z)(z-V(r(z))= r^2(z)V'(r(z))\;.
  \end{equation}

  In order to prove concavity of $\sqrt{g}$ it suffices to check that
  $(\sqrt{g})'' \leq 0$, that is $(g')^2 \geq 2gg''\;$. By
  \eqref{eq-Leg-1},\eqref{eq-Leg-2} and \eqref{eq-Leg-vsp-1} this is
  equivalent to $r^2(z) \geq 2r^2(z)V'(r(z))r'(z)$, but this follows
  from equation \eqref{eq-Leg-3}.
\end{proof}

\begin{prop}\label{prop-rot}
  Assume that $V(0)=0$ and $V >0,V'>0$ and $V''\geq 0$ on
  $(0,+\infty)$.  Then the classical spectrum of $((\OP(H)),(\OP(F)))$
  is convex, and hence by Corollary \ref{cor:im0}(ii) it can be
  recovered from the joint spectrum of $(\OP(H),\OP(F))$ in the
  classical limit.
\end{prop}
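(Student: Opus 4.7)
The plan is to give an explicit description of the classical spectrum $\Phi(\RM^4)$, where $\Phi=(H,F)$, and then read off convexity from Lemma \ref{lem-Leg}.

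First I would reduce the description to two scalar variables: let $r=x_1^2+x_2^2\geq 0$ and $p=\xi_1^2+\xi_2^2\geq 0$. Then $H=p/2+V(r)$, while by the Cauchy--Schwarz inequality
\[
F^2 = (x_1\xi_2-x_2\xi_1)^2 \leq (x_1^2+x_2^2)(\xi_1^2+\xi_2^2) = r p,
\]
with equality attained whenever $(\xi_1,\xi_2)$ is orthogonal to $(x_1,x_2)$. Conversely, for any $r,p\geq 0$ and any $F_0$ with $F_0^2\leq rp$, one can choose points in the fibers $|x|^2=r$, $|\xi|^2=p$ realizing $F=F_0$. Hence a point $(h,\phi)\in\RM^2$ lies in $\Phi(\RM^4)$ if and only if there exists $r\geq 0$ with $V(r)\leq h$ and $\phi^2\leq 2r(h-V(r))$. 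Since $r\mapsto r(h-V(r))$ is negative when $V(r)>h$, the constraint $V(r)\leq h$ may be dropped from the supremum, so $(h,\phi)\in\Phi(\RM^4)$ if and only if $h\geq 0$ and $\phi^2\leq 2g(h)$, where $g$ is the Legendre transform of Lemma \ref{lem-Leg}.

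Second, I would show that the set
\[
\Delta := \{(h,\phi)\in\RM^2\ :\ h\geq 0,\ \phi^2\leq 2g(h)\}
\]
is convex. The set $\Delta$ is the subgraph of $h\mapsto \sqrt{2g(h)}$ symmetrized in $\phi$, so its convexity is equivalent to the concavity of $h\mapsto\sqrt{2g(h)}$ on $[0,+\infty)$. Lemma \ref{lem-Leg} gives concavity of $\sqrt{g}$ on $(0,+\infty)$ together with $g(0)=0$; concavity extends to the closed half-line $[0,+\infty)$ by continuity of $g$ at $0$, since for a concave function on $(0,+\infty)$ vanishing at $0^+$ one has $\sqrt{g(th)}\geq t\sqrt{g(h)}$ for all $t\in(0,1)$, $h>0$, by letting the other endpoint tend to $0$ in the concavity inequality.

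Third, having identified $\Phi(\RM^4)=\Delta$ and shown $\Delta$ is convex, Corollary \ref{cor:im0}(ii) applied to the commuting pseudodifferential operators $\OP(H)$ and $\OP(F)$ (whose symbols are $\hbar$-independent, hence mildly depend on $\hbar$) yields the conclusion. The main obstacle I expect is the convexity argument itself, which is exactly what Lemma \ref{lem-Leg} was designed to overcome; the identification of $\Phi(\RM^4)$ with a sublevel set of the Legendre transform of $rV(r)$ is the key geometric step that reduces a question about a four-dimensional image to a one-variable concavity statement.
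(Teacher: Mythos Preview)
Your proof is correct and follows essentially the same route as the paper: you identify the classical spectrum as $\{(h,\phi): h\geq 0,\ \phi^2\leq 2g(h)\}$ via the Legendre transform of $rV(r)$ and then invoke Lemma~\ref{lem-Leg} for the concavity of $\sqrt{g}$. The only differences are cosmetic---you spell out the Cauchy--Schwarz step and the extension of concavity to the endpoint $h=0$ more explicitly than the paper does.
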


\begin{proof}
  Put $s=\frac{\xi_1^2 +\xi_2^2}{2}$ and $r=x_1^2+x_2^2$. Fix an
  energy level $\Sigma_z= \{H=z\}$, that is $s= z- V(r)$. Let $g(z)$
  be the maximal possible value of $rs$ on this level, that is $g$ is
  given by \eqref{eq-Leg-g}. Let $r(z)$ be as in the lemma above.
  Since $$\{|x|=\sqrt{r(z)}, \; |\xi| = \sqrt{2(z-V(r(z)))}\} \subset
  \Sigma_z,$$ the value of $F(x,\xi)= x_1 \xi_2 -x_2\xi_1$ runs between
  $-|x|\cdot |\xi|=-\sqrt{2g(z)}$ and $|x|\cdot |\xi|=\sqrt{2g(z)}$.
  It follows that the classical spectrum is the subset of
  $\mathbb{R}^2$ defined by $ |F| \leq \sqrt{2g(H)}$, which, by Lemma
  \ref{lem-Leg}, a is convex set.
\end{proof}

\section{Non\--commuting operators}

Let $\cH$ be a separable Hilbert space, and let $T_1,\dots,T_d$ be
possibly unbounded selfadjoint operators on $\cH$. 
Denote by
$\cQ$ the set of all \emph{mixed quantum states} for
$T_1,\dots,T_d$~: the elements of $\cQ$ are trace class selfadjoint operators
$Q\in\cL(\cH)$ such that
\begin{enumerate}
\item $Q\geq 0$;
\item $\trace{Q} = 1$;
\item $T_jQ \in \cL^1$, for all $j=1,\dots, d$.
\end{enumerate}
Here $\cL^1$ stands for the Schatten trace class, and 
the image of $Q$ is assumed to lie in the domain of $T_j$, 
$j=1,\ldots,d$. Of course $\cQ$ is
not a vector space, but it is a convex set of compact operators.  The
third condition is void in case the operators $T_j$'s are bounded:
indeed, the Hölder inequality gives $$\norm{T_j Q}_{\cL^1}\leq
\norm{T_j}\norm{Q}_{\cL^1}$$ (see also~\cite[Theorem
VI.19]{reed-simon-1}).

\begin{definition} \label{ex} The map
  \begin{eqnarray} \label{expec:map} E: \cQ \to \R^d, \;\; Q \mapsto
    (\trace(T_1Q),\dots, \trace(T_dQ))\;,
  \end{eqnarray}
  is called the {\it expectation map}. We denote by $
  \Sigma(T_1,\dots,T_d) \subset \mathbb{R}^d $ the closure of the image
  of $E$.
\end{definition}

An interesting subset of $\cQ$ consists of the so-called \emph{pure
  states}, i.e. rank-one orthogonal projectors. We shall call the
closure of the image of pure states by $E$ the \emph{joint numerical
  range} $R$~:
\[
R := \overline{\{(\pscal{T_1 u}{u}, \dots, \pscal{T_d u}{u}); \quad u\in\cH,
  \norm{u}=1\}} \subset  \Sigma(T_1,\dots,T_d).
\]

\begin{lemma}
  \label{lemm:numerical-range}
  $ \Sigma(T_1,\dots,T_d)$ is the convex hull of $R$.
\end{lemma}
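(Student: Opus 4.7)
The plan is to prove two opposite inclusions. The easy direction uses the linearity of $E$ on the convex set $\cQ$, while the harder one uses the spectral decomposition of the trace-class operator $Q$.

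\emph{Step 1 (easy inclusion).} For any unit vector $u$ in the common domain of the $T_j$, the pure state $P_u:=|u\rangle\langle u|$ belongs to $\cQ$: it has rank one, so each $T_jP_u$ is automatically in $\cL^1$, and $E(P_u)=(\pscal{T_1u}{u},\dots,\pscal{T_du}{u})$. Taking closures gives $R\subset\Sigma$. Moreover $\cQ$ is convex and $E$ is affine, so $E(\cQ)$ is convex, and hence $\Sigma$ is a closed convex subset of $\R^d$ containing $R$. In particular it contains the convex hull of $R$.

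\emph{Step 2 (hard inclusion).} Given $Q\in\cQ$, the spectral theorem applied to the compact positive operator $Q$ yields $Q=\sum_n\lambda_n P_{e_n}$ with $(e_n)$ orthonormal, $\lambda_n>0$, and $\sum_n\lambda_n=\trace(Q)=1$. Each such $e_n$ lies in the range of $Q$, hence in the common domain of the $T_j$. Computing the trace of $T_jQ\in\cL^1$ in an orthonormal basis made of the $e_n$ together with a basis of $\ker Q$, and using $T_jQe_n=\lambda_n T_je_n$, one obtains
\[
E(Q)=\sum_n\lambda_n E(P_{e_n}),
\]
an infinite convex combination of points of $R$. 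Writing $\sigma_N=\sum_{n\leq N}\lambda_n$, the elements $\sigma_N^{-1}\sum_{n\leq N}\lambda_n E(P_{e_n})$ belong to $\cv{R}$ and converge to $E(Q)$ as $N\to\infty$ (since $\sigma_N\to 1$). Hence $E(Q)\in\overline{\cv{R}}$, and passing to closures gives $\Sigma\subset\overline{\cv{R}}$.

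The main technical obstacle is the trace identity $\trace(T_jQ)=\sum_n\lambda_n\pscal{T_je_n}{e_n}$ when $T_j$ is unbounded; it rests on the basis-independence of the trace of $T_jQ\in\cL^1$ combined with the identity $T_jQe_n=\lambda_n T_je_n$ on the eigenvectors with $\lambda_n>0$, and on the fact that the hypothesis ``image of $Q$ in the domain of $T_j$'' ensures the inner products make sense. A minor wording issue is that the lemma states $\Sigma=\cv{R}$ rather than $\Sigma=\overline{\cv{R}}$; the two coincide whenever $R$ is compact (in particular when the $T_j$ are bounded), which covers the cases of interest in the paper.
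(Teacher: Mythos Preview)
Your argument is essentially identical to the paper's: both diagonalize $Q$, write $E(Q)$ as the infinite convex combination $\sum_n \lambda_n\,(\pscal{T_1 e_n}{e_n},\dots,\pscal{T_d e_n}{e_n})$, and obtain the reverse inclusion from the convexity of $\cQ$ and the affinity of $E$. You are in fact more careful than the paper on the closure point---the paper simply asserts that the closed convex hull of $R$ equals $\cv{R}$ ``since $R$ is closed,'' whereas you correctly observe that the proof only yields $\Sigma=\overline{\cv{R}}$ and that this matches $\cv{R}$ once $R$ is compact.
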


\begin{proof}
  Let $Q\in\cQ$, and let $(e_n)_{n\geq 0}$ be a Hilbert basis of
  eigenvectors of $Q$ so that $Qe_n=q_ne_n$.
   For any $j=1,\dots, d$, we can write
  \[
  \trace{T_jQ} = \sum_{n\geq 0} \pscal{T_jQ e_n}{e_n} = \sum_{n\geq 0}
  q_n \pscal{T_j e_n}{e_n},
  \]
  where $q_n\geq 0$ and $\sum_n q_n = 1$. Therefore any element of
  $E(\cQ)$ is of the form $\sum_n q_n \vec z_n$, where
\[
\vec z_n := (\pscal{T_1 e_n}{e_n},\dots \pscal{T_d e_n}{e_n}) \in R.
\]
Being an infinite convex combination of elements of $R$, $\vec z_n$
must lie in the closed convex hull of $R$, which is equal to the
convex hull of $R$ since $R$ is closed, by definition. Thus
\[
 \Sigma(T_1,\dots,T_d) \subset \cv{R}.
\]
The other inclusion follows from $R\subset \Sigma(T_1,\dots,T_d)$ and
the fact that, since $\cQ$ is a convex set and $E$ is an affine map,
$\Sigma(T_1,\ldots,T_d)$ is convex.
\end{proof}

\begin{prop}\label{prop-joint-taketwo} If the operators
  $T_1,\dots,T_d$ pairwise commute, the set $\Sigma(T_1,\dots,T_d)$
  coincides with ${\rm Convex\,\, Hull}\,
  (\textup{JointSpec}(T_1,\dots,T_d))$.
  % \end{equation}
\end{prop}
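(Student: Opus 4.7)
\medskip
\noindent\textbf{Proof plan.}
By Lemma \ref{lemm:numerical-range} we already know that $\Sigma(T_1,\dots,T_d)$ equals the convex hull of the joint numerical range $R$. So the plan reduces to showing that, under the commutation assumption,
\[
\cv{R} \,=\, \cv{\op{JointSpec}(T_1,\dots,T_d)}.
\]
The main tool is the joint spectral theorem for pairwise commuting selfadjoint (possibly unbounded) operators: there exists a projection\--valued measure $E$ on the Borel sets of $\R^d$, concentrated on the joint spectrum $\Si := \op{JointSpec}(T_1,\dots,T_d)$, such that $T_j = \int_{\R^d} x_j \, \DD E(x)$ for each $j$. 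For any unit vector $u \in \bigcap_j \op{Dom}(T_j)$, the scalar measure $\mu_u(A) := \pscal{E(A)u}{u}$ is a Borel probability measure with $\op{supp}(\mu_u) \subset \Si$ and
\[
\pscal{T_j u}{u} = \int_{\R^d} x_j \, \DD\mu_u(x), \qquad j=1,\dots,d.
\]

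First I would prove the inclusion $R \subset \cv{\Si}$. For any unit vector $u$ in the common domain, the vector $(\pscal{T_1 u}{u},\dots,\pscal{T_d u}{u})$ is the center of mass of the probability measure $\mu_u$ supported in $\Si$. A standard Hahn\--Banach argument (applied to every closed half\--space containing $\Si$) shows that the center of mass of any compactly integrable Borel probability measure lies in the closed convex hull of its support. Since closed convex hulls are closed, taking closure yields $R \subset \cv{\Si}$, hence
\[
\Si(T_1,\dots,T_d) \,=\, \cv{R} \,\subset\, \cv{\Si}.
\]

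For the reverse inclusion it suffices to show $\Si \subset R$, since then $\cv{\Si} \subset \cv{R}$. Fix $\la = (\la_1,\dots,\la_d) \in \Si$ and $\ep>0$; let $U$ be the open ball of radius $\ep$ around $\la$. Since $\la$ lies in the support of $E$, the spectral projection $E(U)$ is nonzero, and we can pick a unit vector $u \in E(U)\cH$. For such a $u$, the measure $\mu_u$ is supported in $U$, so $u$ lies in $\op{Dom}(T_j)$ for every $j$, and
\[
\abs{\pscal{T_j u}{u} - \la_j} \,=\, \abs{\int_U (x_j-\la_j)\, \DD\mu_u(x)} \,\leq\, \ep.
\]
Letting $\ep \to 0$ gives $\la \in R$ as required.

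The main technical obstacle is the care needed for the unbounded case: one must verify that vectors in $E(U)\cH$ for bounded $U$ lie in the common domain of the $T_j$, that $\mu_u$ has finite first moment so the center\--of\--mass argument applies, and that the two operator-theoretic identities above (the spectral representation of $\pscal{T_j u}{u}$ and the bound on $\|(T_j-\la_j)u\|$ via $E(U)$) are justified from the joint spectral calculus. Once these standard points are in place, the two inclusions above combine with Lemma \ref{lemm:numerical-range} to give the desired equality.
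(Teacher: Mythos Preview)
Your argument is correct, but it follows a genuinely different route from the paper's. Both proofs start by invoking Lemma~\ref{lemm:numerical-range} to reduce to the equality $\cv{R}=\cv{\js(T_1,\dots,T_d)}$. From there the paper works \emph{dually}: it computes the support function $\Phi_R(\alpha)=\sup_{\norm{u}=1}\pscal{T^{(\alpha)}u}{u}$ with $T^{(\alpha)}=\sum_j\alpha_jT_j$, and then quotes Lemma~\ref{lem:btcommuting} (together with~\eqref{equ:rayleigh}) to identify this with $\Phi_{\js(T_1,\dots,T_d)}(\alpha)$; Lemma~\ref{lem:duality} then finishes. In other words, the paper reduces the $d$-dimensional statement to a one-parameter family of scalar spectral facts already established earlier. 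Your approach instead works \emph{primally} with the full joint spectral measure: you show $R\subset\cv{\Si}$ by a barycenter/Hahn--Banach argument and $\Si\subset R$ by an approximate-eigenvector construction from spectral projections on small balls. This is more self-contained and conceptually transparent, and essentially re-proves (a sharpening of) Lemma~\ref{lem:btcommuting} along the way; the price is the extra bookkeeping with domains and moments in the unbounded case that you flag at the end. The paper's route is shorter here precisely because it recycles the support-function machinery and Lemma~\ref{lem:btcommuting} already set up for the main theorems.
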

\begin{proof}
  Using the notation of Section~\ref{sec:convexity}, we consider the
  map $\alpha \mapsto \Phi_R(\alpha)$, $\alpha\in S^{d-1}$. Notice that
  \begin{equation}
    \Phi_R(\alpha) = \sup_{\norm{u}=1}\pscal{T^{(\alpha)}u}{u},
    \label{equ:phi-R}  
  \end{equation}
  where $T^{(\alpha)}:=\sum_j \alpha_j T_j$. From
  Lemma~\ref{lem:btcommuting}, we have 
\begin{equation}
    \sup_{\norm{u}=1}\pscal{T^{(\alpha)}u}{u} =
    \Phi_{\js(T_1,\dots,T_d)}(\alpha).
    \label{equ:phi-js}
  \end{equation}
 Therefore, by Lemma~\ref{lem:duality},
  \[
  \cv{R} = \cv{\js(T_1,\dots,T_d)}.
  \]
  The result now follows from Lemma~\ref{lemm:numerical-range}.
\end{proof}

\begin{remark} \label{specsing} Proposition \ref{prop-joint-taketwo}
  applied to one operator $T$ with spectrum $\sigma$ gives $\max
  \sigma(T) = \max \Sigma (T)$.
\end{remark}

The following extends Theorem \ref{theo:toeplitz}.
\begin{theorem}
  If $\{T_{j,\hbar}\}$ is any collection of semiclassical operators 
  and classical spectrum $\mathcal{S}$.  Let $J$ be a subset of
  $I$ that accumulates at $0$.  Then:
  
  \begin{itemize}
  \item[{\rm (i)}]
   From the family
  \[
  \Big\{ {\rm Convex\,\, Hull}\,\Big(\textup{JointSpec}
  (T_1,\dots,T_d)\Big) \Big\}_{\hbar \in J}
  \]
  one can recover the convex hull of $\mathcal{S}$;
  \item[{\rm (ii)}] 
 If moreover the principal symbols of $T_1,\ldots,T_d$ are bounded  then
 $$\lim_{\hbar \rightarrow 0}
 \Sigma(T_{1,\hbar},\dots,T_{d,\hbar}) \,=\, \cv{\mathcal{S}}.$$
 \end{itemize}
\end{theorem}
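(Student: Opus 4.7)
The plan is to reduce the non\--commuting case to a family of single\--operator problems parametrized by $\alpha \in S^{d-1}$, exactly as in the proof of Theorems \ref{theo:general} and \ref{theo:bounded}, and to replace the joint spectrum by the set $\Sigma$ of expectation values, whose support function remains straightforward to compute even without any commutativity assumption.

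First I would fix $\alpha \in S^{d-1}$ and consider the single semiclassical operator $T^{(\alpha)}_\hbar := \sum_{j=1}^d \alpha_j T_{j,\hbar}$, whose principal symbol is $f_\alpha := \sum_{j=1}^d \alpha_j f_j$, where $f_j$ denotes the principal symbol of $T_{j,\hbar}$ (this uses only the $\mathbb{R}$\--linearity of $\OP$ and the definition of the principal part in \eqref{eq-vector}). By Lemma \ref{lemm:numerical-range}, $\Sigma(T_{1,\hbar},\dots,T_{d,\hbar})$ is the convex hull of the joint numerical range $R_\hbar$, so its support function equals $\Phi_{R_\hbar}$ by Lemma \ref{lem:duality}(ii). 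Combining the identity \eqref{equ:phi-R} with the Rayleigh formula \eqref{equ:rayleigh} then gives
\[
\Phi_{\Sigma(T_{1,\hbar},\dots,T_{d,\hbar})}(\alpha) \;=\; \sup_{\norm{u}=1} \pscal{T^{(\alpha)}_\hbar u}{u} \;=\; \sup \sigma\bigl(T^{(\alpha)}_\hbar\bigr).
\]
Applying Lemma \ref{lem:max} to the \emph{single} operator $T^{(\alpha)}_\hbar$ (no commutativity is used in that lemma) yields
\[
\lim_{\hbar \to 0} \Phi_{\Sigma(T_{1,\hbar},\dots,T_{d,\hbar})}(\alpha) \;=\; \sup_M f_\alpha \;=\; \Phi_{\mathcal{S}}(\alpha),
\]
for every $\alpha \in S^{d-1}$.

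This pointwise convergence of support functions suffices, by Lemma \ref{lem:duality}(i), to reconstruct $\cv{\mathcal{S}}$ from the family $\{\Sigma(T_{1,\hbar},\dots,T_{d,\hbar})\}_{\hbar \in J}$ by intersecting the half\--spaces $\{x : \pscal{x}{\alpha} \leq \Phi_{\mathcal{S}}(\alpha)\}$; in the commuting case Proposition \ref{prop-joint-taketwo} identifies $\Sigma$ with $\cv{\textup{JointSpec}}$, which is why (i) may be phrased in terms of the joint spectrum. For (ii), the boundedness hypothesis on the principal symbols together with \eqref{eq-norm-bounds-vsp} implies that each $T_{j,\hbar}$ is uniformly bounded in $\hbar$, so the sets $\Sigma(T_{1,\hbar},\dots,T_{d,\hbar})$ lie in a common compact ball of $\mathbb{R}^d$. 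Proposition \ref{prop:phi}(iii) then gives a uniform Lipschitz bound on the support functions, so equicontinuity on the compact sphere $S^{d-1}$ upgrades the pointwise limit to uniform convergence; sandwiching via Proposition \ref{prop:phi}(i),(ii) produces the Hausdorff limit exactly as in the proof of Theorem \ref{theo:bounded}. The main conceptual step is recognizing that the convex set $\Sigma$, rather than any naive notion of joint spectrum, is the correct non\--commutative analogue; once the identity $\Phi_{\Sigma}(\alpha) = \sup \sigma(T^{(\alpha)}_\hbar)$ is in hand via Lemma \ref{lemm:numerical-range}, the argument runs in parallel with the commuting case because Lemma \ref{lem:max} concerns only the single operator $T^{(\alpha)}_\hbar$ and never invokes commutation among the $T_{j,\hbar}$.
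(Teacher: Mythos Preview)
Your proposal is correct and follows essentially the same route as the paper's proof: reduce to the single operator $T^{(\alpha)}$, identify $\Phi_{\Sigma}(\alpha)$ with $\sup\sigma(T^{(\alpha)})$ via Lemma~\ref{lemm:numerical-range} and the Rayleigh formula, apply Lemma~\ref{lem:max}, and then repeat verbatim the convexity arguments of Theorems~\ref{theo:general} and~\ref{theo:bounded} with $\Sigma$ in place of $\cv{\js}$. Your citation of \eqref{equ:rayleigh} rather than \eqref{equ:phi-js} for the step $\sup_{\norm{u}=1}\pscal{T^{(\alpha)}u}{u}=\sup\sigma(T^{(\alpha)})$ is in fact the cleaner justification, since \eqref{equ:phi-js} was derived under a commutativity hypothesis.
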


\begin{proof}
  Recall from~\eqref{equ:phi-R} and~\eqref{equ:phi-js} that if $\alpha
  \in S^{d-1}$,
  \begin{eqnarray}\label{eq-18-new}
    \max \sigma \Big(\sum_{j=1}^d \alpha_j T_j \Big) = 
    \max \{ \langle x,\alpha \rangle   \,\, | \,\,  x \in \Sigma(T_1,\dots,T_d)\}\;.
  \end{eqnarray}
  Now we repeat the proof of Theorem \ref{theo:general}.  Let $F$ be
  the map of principal symbols of $T_1,\dots,T_d$, and write
  $f_{\alpha}:=\langle \alpha, \, F \rangle$.  By Lemma \ref{lem:max},
  we have that $$\lim_{\hbar \to 0} \max \sigma \Big(\sum_{j=1}^d
  \alpha_j T_j \Big) = \max f_\alpha$$ which, in view of
  \eqref{eq-18-new}, reads
  \begin{equation}
    \lim_{\hbar \to 0} \Phi_{\Sigma(T_1,\dots,T_d)}(\alpha) = \Phi_{F(M)}(\alpha).
    \label{equ:pointwise-11}
  \end{equation}
  To finish the proof we repeat the proof of Theorem
  \ref{theo:bounded}, where the convex hull ${\rm Convex\,\, Hull}\,
  (\textup{JointSpec}(T_1,\dots,T_d))$ is replaced by
  $\Sigma(T_1,\dots,T_d)$, and (\ref{equ:pointwise}) by
  \eqref{equ:pointwise-11}.
\end{proof}

\vspace{2mm}

\noindent\emph{Acknowledgements.}
AP was partially
supported by NSF Grants DMS-0965738 and DMS-0635607, an NSF CAREER
Award, a Leibniz Fellowship, Spanish Ministry of Science Grants MTM
2010-21186-C02-01 and Sev-2011-0087. LP is partially supported by the National Science
Foundation grant DMS-1006610 and the Israel Science Foundation grant
509/07.   He gratefully acknowledges the warm hospitality from the University of Chicago 
where part of this project has been fulfilled. VNS is partially supported by the Institut Universitaire de
France, the Lebesgue Center (ANR Labex LEBESGUE), and the ANR NOSEVOL
grant.  He gratefully acknowledges the hospitality of the IAS.

\bibliographystyle{new} \addcontentsline{toc}{section}{References}

\noindent
\medskip\noindent

\noindent
\noindent
\\
{\bf {\'A}lvaro Pelayo} \\
School of Mathematics\\
Institute for Advanced Study\\
Einstein Drive\\
Princeton, NJ 08540 USA.
\\
\\
\noindent
Washington University,  Mathematics Department \\
One Brookings Drive, Campus Box 1146\\
St Louis, MO 63130-4899, USA.\\
{\em E\--mail}: \texttt{apelayo@math.wustl.edu} \\
{\em Website}: \url{http://www.math.wustl.edu/~apelayo/}

\medskip\noindent

\noindent
\noindent
{\bf Leonid Polterovich} \\
School of Mathematical Sciences\\
Tel Aviv University\\
69978 Tel Aviv\\
Israel \\
{\em E\--mail}: \texttt{polterov@runbox.com}
\\
{\em Website}: \url{http://www.math.tau.ac.il/~polterov/}

\medskip\noindent

\noindent
\noindent
{\bf San V\~u Ng\d oc} \\
Institut Universitaire de France
\\
\\
Institut de Recherches Math\'ematiques de Rennes\\
Universit\'e de Rennes 1\\
Campus de Beaulieu\\
F-35042 Rennes cedex, France\\
{\em E-mail:} \texttt{san.vu-ngoc@univ-rennes1.fr}\\
{\em Website}: \url{http://blogperso.univ-rennes1.fr/san.vu-ngoc/}

\end{document}